\tikzset{my arrow/.style={blue!60!black,-latex},
  set@com@col/.style={},set@com@col@aryarg/.style={column #1/.style={set@com@col}},
  set@com@row/.style={},set@com@row@aryarg/.style={row #1/.style={set@com@row}},
  set common column/.style 2 args={set@com@col/.style={#2}, set@com@col@aryarg/.list={#1}},
  set common row/.style 2 args={set@com@row/.style={#2}, set@com@row@aryarg/.list={#1}},
}
\begin{document}
\title{Motivo: fast motif counting via succinct \\ color coding and adaptive sampling}

\numberofauthors{3}
\author{
\alignauthor Marco Bressan\thanks{\small Supported in part by the ERC Starting Grant DMAP 680153, a Google Focused Research Award, and by the MIUR grant ``Dipartimenti di eccellenza 2018-2022'' of the Dept.\ of Computer Science of Sapienza.}
\\
\affaddr{Dipartimento di Informatica, Sapienza Università di Roma} \\
\email{bressan@di.uniroma1.it} 
\and
\alignauthor Stefano Leucci \\
\affaddr{Department of Algorithms and Complexity, MPI-INF} \\
\email{\makebox[0pt]{stefano.leucci@mpi-inf.mpg.de}}
\and
\alignauthor Alessandro Panconesi\footnotemark[1]\\
\affaddr{Dipartimento di Informatica, Sapienza Università di Roma} \\
\email{ale@di.uniroma1.it}
}

\newtheorem{lemma}{Lemma}
\newtheorem{theorem}{Theorem}
\newtheorem{hypothesis}{Hypothesis}
\newtheorem{corollary}{Corollary}
\newtheorem{proposition}{Proposition}

\newcommand{\dsstyle}[1]{\textsc{\small{#1}}}
\newcommand{\infroadusa}{\dsstyle{Road-US}}
\newcommand{\facebook}{\dsstyle{Facebook}}
\newcommand{\orkut}{\dsstyle{Orkut}}
\newcommand{\yelp}{\dsstyle{Yelp}}
\newcommand{\hollywood}{\dsstyle{Hollywood}}
\newcommand{\lj}{\dsstyle{LiveJournal}}
\newcommand{\wordassoc}{\dsstyle{WordAssoc}}
\newcommand{\twitter}{\dsstyle{Twitter}}
\newcommand{\skitter}{\dsstyle{Skitter}}
\newcommand{\patents}{\dsstyle{Patents}}
\newcommand{\amazon}{\dsstyle{Amazon}}
\newcommand{\fs}{\dsstyle{Friendster}}
\newcommand{\dblp}{\dsstyle{Dblp}}
\newcommand{\roadca}{\dsstyle{Road-CA}}
\newcommand{\roadpa}{\dsstyle{Road-PA}}
\newcommand{\bstan}{\dsstyle{Berk\-Stan}}
\newcommand{\LAW}{\small{LAW}}
\newcommand{\SWS}{\small{MPI-SWS}}
\newcommand{\NDR}{\small{NDR}}
\newcommand{\SNAP}{\small{SNAP}}
\newcommand{\YLP}{\small{YLP}}
\newcommand{\prob}{\operatorname{Pr}}
\newcommand{\bfx}{\mathbf{x}}
\newcommand{\bfc}{\mathbf{c}}
\newcommand{\bfy}{\mathbf{y}}
\newcommand{\E}{\mathbb{E}}
\newcommand{\gk}{\mathcal{G}_k}
\newcommand{\vk}{\mathcal{V}}
\newcommand{\ek}{\mathcal{E}}
\newcommand{\gnk}{\mathcal{G}}
\newcommand{\var}[1]{\operatorname{Var}[#1]}
\newcommand{\cov}{\operatorname{Cov}}
\newcommand{\cerr}{\bar{c}}
\newcommand{\glet}{\ensuremath{x}}
\newcommand{\motivo}{\textsc{motivo}}
\newcommand{\cc}{\textsc{cc}}
\newcommand{\Rplus}{\protect\hspace{-.1em}\protect\raisebox{.35ex}{\smaller{\smaller\textbf{+}}}}
\newcommand{\Cpp}{\mbox{C\Rplus\Rplus}\xspace}
\newcommand{\labelspacing}{6pt}
\hyphenation{graph-let}
\hyphenation{graph-lets}
\newcommand{\sample}{\ensuremath{\operatorname{sample}}}

\newcommand{\tc}{T_C}
\newcommand{\tca}{T'_{C'}}
\newcommand{\tcb}{T''_{C''}}
\newcommand{\ta}{T'}
\newcommand{\tb}{T''}

\maketitle
\begin{abstract}
The randomized technique of color coding is behind state-of-the-art algorithms for estimating graph motif counts.
Those algorithms, however, are not yet capable of scaling well to very large graphs with billions of edges. In this paper we develop novel tools for the ``motif counting via color coding'' framework.
As a result, our new algorithm, \motivo, is able to scale well to larger graphs while at the same time provide more accurate graphlet counts than ever before. This is achieved thanks to two types of improvements.
First, we design new succinct data structures that support fast common color coding operations, and a biased coloring trick that trades accuracy versus running time and memory usage.
These adaptations drastically reduce the time and memory requirements of color coding.
Second, we develop an adaptive graphlet sampling strategy, based on a fractional set cover problem, that breaks the additive approximation barrier of standard sampling.
This strategy gives multiplicative approximations for all graphlets at once, allowing us to count not only the most frequent graphlets but also extremely rare ones.

To give an idea of the improvements, in $40$ minutes \motivo\ counts $7$-nodes motifs on a graph with $65$M nodes and $1.8$B edges; this is $30$ and $500$ times larger than the state of the art, respectively in terms of nodes and edges.
On the accuracy side, in one hour \motivo\ produces accurate counts of $\approx \! 10.000$ distinct $8$-node motifs on graphs where state-of-the-art algorithms fail even to find the second most frequent motif.
Our method requires just a high-end desktop machine.
These results show how color coding can bring motif mining to the realm of truly massive graphs using only ordinary hardware.
\end{abstract}

\section{Introduction}
Graphlets, also called motifs or patterns, are small induced subgraphs of a graph.
Graphlets are often considered the ``building blocks'' of networks~\cite{Jha&2015,interactome,Yaveroglu&2014,Yin&2017}, and their analysis has helped understanding network evolution~\cite{Abdelzaher&2015}, designing better graph classification algorithms~\cite{Yaveroglu&2014}, and developing cutting-edge clustering techniques~\cite{Yin&2017}.

A fundamental problem in graphlet mining and analysis is graphlet counting: estimating as accurately as possible the number of copies of a given graphlet (e.g., a tree, a clique, etc.)\ in a graph.
Graphlet counting has a long and rich history, which began with triangle counting and received intense interest in recent years~\cite{Ahmed&2015,Bhuiyan&2012,Chakaravarthy&2016,Chen&2016,Han&2016,Jha&2015,Pinar&2017,Slota&2013,Wang&2014,Wang&2015,Wang&2016,Zhao&2010}.
Since exact graphlet counting is notoriously hard, one must resort to approximate probabilistic counting to obtain algorithms with an acceptable practical performance.
Approximate counting is indeed often sufficient, for example when performing hypothesis testing (deciding if a graph comes from a certain distribution or not) or estimating the clustering coefficient of a graph (the fraction of triangles among $3$-node graphlets).

The simplest formulation of approximate graphlet counting, which we adopt in this work, is the following.
We are given a simple graph $G$ on $n$ nodes, an integer $k > 2$, and two approximation parameters $\epsilon,\delta \in (0,1)$.
For each  graphlet $H$ on $k$ nodes (the clique, the path, the star etc.), we want a very reliable estimate and accurate estimate of the number of induced copies of $H$ in $G$:
with probability at least $1-\delta$, all estimates should be within a factor $(1 \pm \epsilon)$ of the actual values.
Note that we are talking about induced copies; non-induced copies are easier to count and can be derived from the induced ones.
Our goal is to develop practical algorithms that solve this problem for sizes of $G$ and $H$ that were out of reach before, {\em i.e.}\ graphs with hundreds of millions of edges and graphlets on more than $5$ and $6$ nodes.
Note that the task becomes quickly demanding as $k$ grows; for example, for $k=8$ the number of distinct graphlets is over $10$k, and for $k=10$ over $11.7$M.
Thus, scaling from ``small'' graphlets to ``large'' graphlets likely requires new ideas.

A quick review of existing approaches may help appreciate the state of the art and the main obstacles.
A natural approach to the problem consists in sampling graphlets from $G$, and indeed all known efficient algorithms follow this route.
A popular technique for sampling is to define a random walk over the set of graphlets of $G$, simulate it until it reaches stationarity, and take the last graphlet~\cite{Bhuiyan&2012,Chen&2016,Han&2016,Wang&2014}.
This technique is simple and has a small memory footprint.
However, it cannot estimate graphlet \emph{counts}, but only their frequencies.
Moreover, the random walk may need $\Omega(n^{k-1})$ steps to reach stationarity even if $G$ is fast-mixing~\cite{Bressan&2017,Bressan&2018b}.

An alternative approach that overcomes these limitations was proposed in~\cite{Bressan&2017}.
It extends the color coding technique of Alon et al.~\cite{Alon&1995} by making two key observations.
First, via color coding one can build an abstract ``urn'' which contains a sub-population of all the $k$-trees of $G$ that is very close to the true one.
Second, the problem of sampling $k$-graphlet occurrences can be reduced, with minimal overhead, to sampling $k$-tree occurrences from the urn.
One can thus estimate graphlet counts in two  steps: the \emph{build-up phase}, where one builds the urn from $G$, and the \emph{sampling phase}, where one samples $k$-trees from the urn.
Building the urn requires time $O(a^k m)$ and space $O(a^k n)$ for some $a>0$, where $n$ and $m$ are the number of nodes and edges of $G$, while sampling from the urn takes a variable but typically small amount of time per sample.
The resulting algorithm, dubbed CC in~\cite{Bressan&2017}, outperforms random walk-based approaches and is the current state of the art in approximate motif counting~\cite{Bressan&2017,Bressan&2018b}.

Although CC has extended the outreach of graphlet counting techniques, it cannot effectively cope with graphs with billions of edges and values of $k$ beyond six. This is due to two  main bottlenecks.
First, the time and space taken by the build-up phase are significant and prevent CC from scaling to the values of $G$ and $k$ that we are interested in this paper.
For example, on a machine with 64GB of main memory, the largest graph for which CC runs successfully has $5.4$M nodes for $k=5,6$ and just $2$M nodes for $k=7$.
Second, taking $s$ samples from the abstract urn gives the usual additive $\nicefrac{1}{s}$-approximation, which means we can accurately count only those graphlets whose occurrences are a fraction at least $\nicefrac{1}{s}$ of the total.
Unfortunately, in many graphs most graphlets have a very low relative frequency, and CC is basically useless to count them.

In this work we overcome the limitations of CC by making two main contributions to the ``motif counting via color coding'' framework.
The first contribution is reducing the running time and space usage of the build-up phase.
We do so in three ways.
First, we introduce succinct color coding data structures that can represent colored rooted trees on up to $16$ nodes with just one machine word, and support frequent operations (e.g.\ merging trees) in just a few elementary CPU instructions.
This is key, as colored trees are the main objects manipulated in the build-up phase.
Second, for large graphs we present a simple ``biased coloring'' trick that we use to trade space and time against the accuracy of the urn (the distance of the urn's distribution from the actual tree distribution of $G$), whose loss we quantify via concentration bounds.
Third, we describe a set of architectural and implementation optimizations.
These ingredients make the build-up phase significantly faster and bring us from millions to billions of edges and from $k=5$ to $k=8$.

Our second contribution is for the sampling phase and is of a fundamentally different nature.
To convey the idea, imagine having an urn with 1000 balls of which 990 red, 9 green, and 1 blue.
Sampling from the urn, we will quickly get a good estimate of the fraction of red balls, but we will need many samples to witness even one green or blue ball.
Now imagine that, after having seen those red balls, we could remove from the urn 99\% of all red balls.
We would be left with 10 red balls, 9 green balls, and 1 blue ball.
At this point we could quickly get a good estimate of the fraction of green balls.
We could then ask the urn to delete almost 99\% of the red and green balls, and we could quickly estimate the fraction of blue balls.
What we show here is that the urn built in the build-up phase can be used to perform essentially this ``deletion'' trick, where the object to be removed are treelets.
In this way, roughly speaking, we can first estimate the most frequent graphlet, then delete it from the urn and proceed to the second most frequent graphlet, delete it from the urn and so on.
This means we can in principle obtain a small \emph{relative} error for all graphlets, independently of their relative abundance in $G$, thus breaking the $\Theta(1/\epsilon)$ barrier of standard sampling.
We name this algorithm AGS (adaptive graphlet sampling).
To obtain AGS we actually develop an online greedy algorithm for a fractional set cover problem.
We provide formal guarantees on the accuracy and sampling efficiency of AGS via set cover analysis and martingale concentration bounds.

In order to properly assess the impact of the various optimizations, in this paper we have added them incrementally to CC, which acts as a baseline.  In this way, it is possible to assess in a quantitative way the improvements due to the various components. 

Our final result is an algorithm, \motivo\footnote{The \Cpp\ source code of \motivo\ is publicly available at \hbox{\url{https://bitbucket.org/steven_/motivo}}.}, that scales well beyond the state of the art in terms of input size and simultaneously ensures tighter guarantees.
To give an idea, for $k=7$ \motivo\ manages graphs with tens of millions of nodes and billions of edges, the largest having $65$M nodes and $1.8$B edges.
This is 30 times and 500 times (respectively in terms of $n$ and $m$) what CC can manage.
For $k=8$, our largest graph has $5.4$M nodes and $50$M edges (resp.\ 18 and 55 times CC).
All this is done in $40$ minutes on just a high-end commodity machine.
For accuracy, the most extreme example is the \yelp\ graph, where for $k=8$ all but two graphlets have relative frequency below $10^{-7}$.
With a budget of $1$M samples, CC finds only the \emph{first} graphlet and misses all the others.
\motivo\ instead outputs accurate counts ($\epsilon \le 0.5$) of more than $90\%$ of all graphlets, or $10.000$  in absolute terms.
The least frequent ones of those graphlets have frequency below $10^{-20}$, and CC would need $\sim \! 3\cdot10^3$ years to find them even if it took one billion samples per second.

\subsection{Related work}
\label{sub:rel}
Counting induced subgraphs is a classic problem in computer science.
The exact version is notoriously hard; even detecting a $k$-clique in an $n$-node graph requires time $n^{\Omega(k)}$ under the Exponential Time Hypothesis~\cite{Chen&2006}.
It is not surprising then that practical exact counting algorithms exist only for $k \le 5$.
The fastest such algorithm is currently ESCAPE~\cite{Pinar&2017}, which can take a week on graphs with a few million nodes.
When possible we use it for our ground truth.

For approximate graphlet counting many techniques have been proposed.
For $k \le 5$, one can sample graphlets via path sampling (do a walk on $k$ nodes in $G$ and check the subgraph induced by those nodes)~\cite{Jha&2015,Wang&2015,Wang&2016}.
This technique, however, does not scale to $k>5$.
A popular approach is to sample graphlets via random walks~\cite{Bhuiyan&2012,Wang&2014,Chen&2016,Han&2016}.
The idea is to define two graphlets as adjacent in $G$ if they share $k-1$ nodes.
This implicitly defines a reversible Markov chain over the graphlets of $G$ which can be simulated efficiently.
Once at stationarity, one can take the sample and easily compute an unbiased estimator of the graphlet frequencies.
Unfortunately, these algorithms cannot estimate counts, but only frequencies.
Even then, they may give essentially no guarantee unless one runs the walk for $\Omega(n^{k-1})$ steps, and in practice they are outperformed by CC~\cite{Bressan&2017,Bressan&2018b}.
Another recent approach is that of edge-streaming algorithms based on reservoir sampling~\cite{DeStefani&2017,DeStefani&2017b}, which however are tailored to $k \le 5$.
As of today, the state of the art in terms of $G$ and $k$ is the color-coding based CC algorithm of~\cite{Bressan&2017,Bressan&2018b}.
CC can manage graphs on $\sim \! 5$M nodes for $k=5,6$, on $\sim \!2$M nodes for $k=7$, and on less than $0.5$M nodes for $k=8$, in a matter of minutes or hours.
As said above, CC does not scale to massive graphs and suffers from the ``naive sampling barrier'' that allows only for additive approximations.
Finally, we shall mention the algorithm of~\cite{Jain&2017} that in a few minutes can estimate clique counts with high accuracy on graphs with tens of millions of edges.
We remark that that algorithm works \emph{only} for cliques, while \motivo\ is general purpose and provides counts for \emph{all} graphlets at once.

\smallskip
\textbf{Preliminaries and notation.}
We denote the host graph by $G=(V,E)$, and we let $n=|V|$ and $m=|E|$.
A \emph{graphlet} is a connected graph $H=(V_H,E_H)$.
A \emph{treelet} $T$ is a graphlet that is a tree.
We denote $k=|V_H|$.
We denote by $\mathcal{H}$ the set of all $k$-node graphlets, i.e.\ all non-isomorphic connected graphs on $k$ nodes.
When needed we denote by $H_i$ the $i$-th graphlet of $\mathcal{H}$.
A colored graphlet has a color $c_u \in [k]$ associated to each one of its nodes $u$.
A graphlet is \emph{colorful} if its nodes have pairwise distinct colors.
We denote by $C \subseteq [k]$ a subset of colors.
We denote by $(T,C)$ or $\tc$ a colored treelet whose nodes span the set of colors $C$; we only consider colorful treelets, i.e.\ the case $|T|$=$|C|$.
Often treelets and colored treelets are rooted at a node $r \in T$.

\smallskip
\textbf{Paper organization.}
Section~\ref{sec:cc} reviews color coding and the CC algorithm.
Section~\ref{sec:speed} introduces our data structures and techniques for accelerating color coding.
Section~\ref{sec:ags} describes our adaptive sampling strategy. %, with formal proofs of its error bounds and Section~\ref{sec:exp} concludes with our experiments.

\section{Color coding and CC}
\label{sec:cc}
\label{sub:cc}
The color coding technique was introduced in~\cite{Alon&1995} to probabilistically detect paths and trees in a graph.
The CC algorithm of~\cite{Bressan&2017,Bressan&2018b} is an extension of color coding that enables sampling colorful graphlet occurrences from $G$.
It consists of a \emph{build-up phase} and a \emph{sampling phase}.

\subsection{The build-up phase}
\label{sub:build}
The goal of this phase is to build a \emph{treelet count table} that is the abstract ``urn'' used for sampling.
First, we do a \emph{coloring} of $G$: for each $v \in G$ independently, we draw uniformly at random a color $c_v \in [k]$.
We then look at the treelets copies of $G$ that are colorful.
For each $v$ and every rooted colored treelet $\tc$ on up to $k$ nodes, we want a count $c(\tc,v)$ of the number of copies of $\tc$ in $G$ that are rooted in $v$ (note that we mean \emph{non-induced} copies here).
To this end, for each $v$ we initialize $c(\tc, v)=1$, where $T$ is the trivial treelet on $1$ node and $C = \{c_v\}$.
For a $\tc$ on $h>1$ nodes, the count $c(\tc, v)$  is then computed via dynamic programming, as follows.
First, $T$ has a unique decomposition into two subtrees $\ta$ and $\tb$ rooted respectively at the root $r$ of $T$ and at a child of $r$.
The uniqueness is given by a total order over treelets (see next section).
Now, since $\ta$ and $\tb$ are smaller than $T$, their counts have already been computed for all possible colorings and all possible rootings in $G$.
Then  $c(\tc, v)$ is given by (see~\cite{Bressan&2018b}):
\begin{equation}
\label{eqn:decomp}
c(\tc, v) = \frac{1}{\beta_T} \sum_{u \sim v} \sum_{\substack{C' \subset C \\ |C'|=|\ta|}}\!\!\!\! c(\tca, v) \cdot c(\tcb, u)
\end{equation}
where $\beta_T$ is the number of subtrees of $T$ isomorphic to $\tb$ rooted at a child of $r$.
CC employs~\eqref{eqn:decomp} in the opposite way: it iterates over all pairs of counts $c(\tca, v)$ and $c(\tcb, u)$ for all $u \sim v$, and if $\tca, \tcb$ can be merged in a colorful treelet $\tc$, then it adds $c(\tca, v) \cdot c(\tcb, u)$ to the count $c(\tc, v)$.
This requires to perform a check-and-merge operation for each count pair, which is quite expensive (see below).

A simple analysis gives the following complexity bounds:
\begin{theorem}(\cite{Bressan&2018b}, Theorem 5.1).
The build-up phase takes time $O(a^k m)$ and space $O(a^k n)$, for some constant $a>0$. 
\end{theorem}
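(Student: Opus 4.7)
The plan is to first bound the total number of distinct colored rooted treelets on at most $k$ nodes by $c^k$ for some constant $c$, and then derive the space and time bounds by aggregating the per-count work. For the combinatorial count, I would use the fact that the number of rooted unlabeled trees on $h$ nodes is at most the number of rooted plane trees, i.e.\ the Catalan number $C_{h-1}\le 4^h$. A colorful assignment of colors from $[k]$ is specified by a color set $C\subseteq[k]$ with $|C|=h$, contributing $\binom{k}{h}$ choices, so summing over $h$ would give $\sum_{h=1}^k 4^h\binom{k}{h}\le 5^k$ colored rooted treelets in total.

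For the space bound, each vertex $v$ stores one count $c(T_C,v)$ per colored rooted treelet, each in $O(1)$ machine words under the standard word-RAM assumption, which gives total space $O(5^k n)=O(a^k n)$. For the time bound, I would process the DP in order of increasing $|T|$; the decomposition $T\mapsto(T',T'')$ in \eqref{eqn:decomp} is unique under the total order on treelets (introduced in the next section), so for each target $(T_C,v)$ with $|T|=h$ the update is a sum over $u\sim v$ and over the $\binom{h}{|T'|}\le 2^h$ subsets $C'\subset C$ of the required size, each summand being a constant-time lookup-and-multiply. Aggregating would then yield
\[
\sum_{v}\deg(v)\sum_{T_C}2^{|T|}\le 2m\sum_{h=1}^k 4^h\binom{k}{h}\,2^h \;=\; 2m\sum_{h=1}^k 8^h\binom{k}{h} \;\le\; 2m\cdot 9^k \;=\; O(a^k m).
\]

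The hard part will be ensuring that \eqref{eqn:decomp} is both well-defined and free of double counting: the factor $\beta_T$ corrects for the automorphisms that would make the same $\tb$ appear multiple times as a child of the root $r$, and the total order on treelets is needed to pick a canonical subtree to strip so that no colored treelet is built up via two different decomposition paths. Once these conventions are in place, the analysis reduces to the two standard ingredients above---the $c^k$ combinatorial bound and the per-edge, per-color-subset accounting---and the constant $a$ can be taken to be any value at least $9$.
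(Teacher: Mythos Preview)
The paper does not actually prove this theorem: it is quoted verbatim from~\cite{Bressan&2018b} and only prefaced with ``a simple analysis gives the following complexity bounds.'' The only trace of the argument in the present paper is the footnote after the description of \motivo's count table, which bounds the record length by $O(6^k)$ via Otter's $O(3^k k^{-3/2})$ bound on rooted treelets times $2^k$ color subsets.

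Your argument is correct and is essentially the intended ``simple analysis.'' You bound rooted treelets on $h$ nodes by the Catalan number $C_{h-1}\le 4^h$ (looser than Otter's $\approx 2.956^h$ but perfectly adequate here) and then aggregate over $h$ with the binomial theorem, which in fact gives the slightly sharper $5^k$ for the number of colored rooted treelets versus the paper's $6^k$. The time analysis---charging each target $(T_C,v)$ a cost of $\deg(v)\cdot\binom{|T|}{|T'|}$ and summing to $O(9^k m)$---is the standard per-edge, per-color-subset accounting and matches what the cited reference does. Two minor remarks: the $O(1)$ lookup you invoke tacitly assumes the hash-table implementation of CC (in \motivo's sorted-array tables a lookup is $O(k)$, which is harmless for an $O(a^k m)$ bound); and your final paragraph about $\beta_T$ and the canonical decomposition is commentary on why~\eqref{eqn:decomp} is \emph{correct}, not part of the complexity argument, so it could be dropped without loss.
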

A major bottleneck is caused by the quick growth of the dynamic programming table: already for $k=6$ and $n=5$M, CC  takes $45$GB of main memory~\cite{Bressan&2018b}.%, and for larger $k$ fails by memory exhaustion.

\subsection{The sampling phase}
\label{sub:sampling}
The goal of this phase is to sample \emph{colorful} graphlet copies u.a.r.\ from $G$, using the treelet count table from the build-up phase.
The key observation (\cite{Bressan&2017,Bressan&2018b}) is that we only need to sample colorful non-induced \emph{treelet} copies; by taking the corresponding induced subgraph in $G$, we then obtain our induced graphlet copies.
Colorful treelets are sampled via a multi-stage sampling, as follows.
First, draw a  node $v \in G$ with probability proportional to $\eta_v = \sum_{\tc}c(\tc,v)$.
Second, draw a colored treelet $\tc$ with probability proportional to $c(\tc, v)/\eta_v$.
We want to sample a copy of $\tc$ rooted at $v$.
To this end we decompose $\tc$ into $\tca$ and $\tcb$, with $\tca$ rooted at the root $r$ of $T$ and $\tcb$ at a child of $r$ (see above).
We then recursively sample a copy of $\tca$ rooted at $v$, and a copy of $\tcb$ rooted at node $u \sim v$, where $u$ is chosen with probability $c(\tc,u)/\sum_{z \sim v}c(\tc,z)$.
Note that computing this probability requires listing all neighbors $z$ of $v$, which takes time proportional to $d_v$.
Finally, we combine $\tca$ and $\tcb$ into a copy of $\tc$.
One can see that this gives a colorful copy of $T$ drawn uniformly at random from $G$.

Consider then a given $k$-graphlet $H_i$ (e.g.\ the clique), and let $c_i$ be the number of colorful copies of $H_i$ in $G$.
We can  estimate $c_i$ as follows.
Let $\chi_i$ be the indicator random variable of the event that a graphlet sample $\glet$ is an occurrence of $H_i$.
It is easy to see that $\E[\chi_i] = c_i \,\sigma_i / t$, where $\sigma_i$ is the number of spanning trees in $H_i$ and  $t$ is the total number of colorful $k$-treelets of $G$.
Both $t$ and $\sigma_i$ can be computed quickly, by summing over the treelet count table and via Kirchhoff's theorem (see below).
We thus let $\hat{c}_i = t\, \sigma_i^{-1} \chi_i$, and $\E[\hat{c}_i] = c_i$.
By standard concentration bounds we can then estimate $c_i$ by repeated sampling.
Note that the expected number of samples to find a copy of $H_i$ grows as $1 / c_i$.
This is the additive error barrier of CC's sampling.

\textbf{Estimators and errors.}
Finally, let us see how to estimate the number of total (i.e.\ \emph{uncolored}) copies $g_i$ of $H_i$ in $G$, which is our final goal.
First, note that the probability that a fixed subset of $k$ nodes in $G$ becomes colorful is $p_k = k! / k^k$.
Therefore, if $G$ contains $g_i$ copies of $H_i$, and $c_i$ is the number those copies that become colorful, then by linearity of expectation $\E[c_i] = p_k g_i$ (seeing $c_i$ as a random variable).
Hence, $\hat{g}_i = c_i/p_k$ is an unbiased estimator for $g_i$.
This is, indeed, the count estimate returned by CC and by \motivo.

For what concerns accuracy, the error given by $\hat{g}_i$ can be formally bounded via concentration bounds.
An additive error bound is given by Theorem 5.3 of~\cite{Bressan&2018b}, which we slightly rephrase.
Let $g = \sum_i g_i$ be the total number of induced $k$-graphlet copies in $G$.
Then:
\begin{theorem}[\cite{Bressan&2018b}, Theorem 5.3]
\label{thm:color_conc}
For all $\epsilon>0$,
\[
\prob\!\Big[\big|\hat{g}_i - g_i\big| > \frac{2 \epsilon g }{1-\epsilon}\Big] = \exp(-\Omega( \epsilon^2 g^{1/k} ))
\]
\end{theorem}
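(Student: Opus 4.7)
The only randomness is the coloring $c:V\to[k]$, where each $c_v$ is independent and uniform. Writing $c_i=\sum_{H}\mathbf{1}[H\text{ colorful}]$, summed over the $g_i$ induced copies of $H_i$ in $G$, each copy is colorful with probability $p_k=k!/k^k$, so by linearity $\E[c_i]=p_k g_i$ and $\E[\hat g_i]=g_i$. The theorem therefore reduces to showing $\prob[|c_i-p_k g_i|>2\epsilon p_k g/(1-\epsilon)]\le\exp(-\Omega(\epsilon^2 g^{1/k}))$, since dividing through by $p_k$ transfers the bound to $\hat g_i$ in exactly the stated form.

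The plan is to apply McDiarmid's bounded-differences inequality to $c_i$ viewed as a function of $c_1,\ldots,c_n$. Changing a single color $c_v$ can flip the colorful status of at most $\Delta_v$ copies of $H_i$, where $\Delta_v$ is the number of copies passing through $v$, yielding
\[
\prob\!\bigl[|c_i-\E[c_i]|>t\bigr]\;\le\;2\exp\!\Bigl(-\tfrac{2t^2}{\sum_v \Delta_v^{2}}\Bigr).
\]
Choosing $t=\Theta(\epsilon g)$, matching the target exponent $\Omega(\epsilon^2 g^{1/k})$ reduces the problem to the combinatorial estimate $\sum_v\Delta_v^{2}=O(g^{2-1/k})$. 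Combining the handshake identity $\sum_v\Delta_v=k g_i\le k g$ with a bound $\max_v\Delta_v=O(g^{(k-1)/k})$ on the maximum degree of the $k$-uniform hypergraph whose hyperedges are the copies of $H_i$ gives $\sum_v\Delta_v^{2}\le\max_v\Delta_v\cdot\sum_v\Delta_v=O(g^{2-1/k})$, and the theorem follows after the algebraic rearrangement $|c_i-\E[c_i]|\le\epsilon p_k g$ implies $|\hat g_i-g_i|\le\epsilon g\le 2\epsilon g/(1-\epsilon)$.

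I expect the bound on $\max_v\Delta_v$ to be the main obstacle. In graphs where many copies of $H_i$ concentrate at a single high-degree vertex (e.g.\ stars, where every copy of $K_{1,k-1}$ shares its center) the worst-case $\Delta_v$ can approach $g_i$ itself, so a naive Kruskal--Katona-style shadow bound is not directly tight and must exploit that copies of $H_i$ are induced subgraphs rather than arbitrary hyperedges. A more robust alternative is polynomial concentration: $c_i$ is a degree-$k$ polynomial in the indicators $\mathbf{1}[c_v=j]$, so a Kim--Vu or Schudy--Sviridenko inequality directly produces tail bounds whose exponent is a fractional power of the mean, and careful accounting of partial derivatives calibrates this fraction to the advertised $g^{1/k}$. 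Either route concludes via the same algebraic translation from $c_i$ to $\hat g_i$.
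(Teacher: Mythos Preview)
This theorem is not proved in the present paper; it is quoted from~\cite{Bressan&2018b} and serves only as background for the new Theorem~\ref{THM:CONC_DEP}, whose proof (Appendix~\ref{apx:conc_dep}) is the closest thing here to compare against. So there is no in-paper proof of this statement to match your proposal to.

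On your proposal itself: the McDiarmid branch does not merely face an ``obstacle'', it fails outright. With $G=K_{1,n-1}$ and $H_i$ the $k$-star, every one of the $g=g_i=\binom{n-1}{k-1}$ copies contains the center $v$, so $\Delta_v=g$, hence $\sum_u\Delta_u^{2}\ge g^{2}$, and bounded differences gives an exponent $\Theta(\epsilon^{2})$ that is independent of $g$. The inequality $\max_v\Delta_v=O(g^{(k-1)/k})$ you need is simply false in this example, and the hope that ``copies of $H_i$ are induced subgraphs rather than arbitrary hyperedges'' might rescue it is misplaced: the star \emph{is} an induced subgraph and the hypergraph of its copies still has a vertex of full degree. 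This route should be discarded, not flagged. Your polynomial-concentration alternative is the right direction. Two ingredients worth carrying into it: (i) the $g$ copies of $k$-graphlets touch at least $g^{1/k}$ distinct vertices, since $g\le\binom{n'}{k}\le(n')^{k}$ whenever only $n'$ vertices are used, and this is the natural source of the exponent $g^{1/k}$; and (ii) as exploited in Appendix~\ref{apx:conc_dep} for the companion theorem, two colorfulness indicators are independent whenever the underlying copies share at most one vertex, so the dependency structure is far sparser than a na\"ive reading suggests. A Kim--Vu or Janson-type bound organised around these two facts is the argument to flesh out.
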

Since we aim at multiplicative errors, we prove a multiplicative bound, which is also tighter than Theorem~\ref{thm:color_conc} if the maximum degree $\Delta$ of $G$ is small.
We prove (see Appendix~\ref{apx:conc_dep}):
\begin{theorem}
\label{THM:CONC_DEP}
For all $\epsilon > 0$,
\begin{align}
\prob\!\Big[\big|\hat{g}_i - g_i\big| > \epsilon \,g_i \Big] < 2 \exp\!\Big(\!-\frac{2\epsilon^2 }{(k-1)!} \frac{p_k \, g_i }{\Delta^{k-2}}\Big)
\end{align}
\end{theorem}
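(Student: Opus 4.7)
My plan is to reduce the statement to a concentration bound for $c_i = \sum_H Y_H$, where the sum runs over the $g_i$ induced copies of $H_i$ in $G$ and $Y_H$ is the Bernoulli indicator that copy $H$ is colorful under the random coloring. Each $Y_H$ has mean $p_k$, so $\E[c_i] = p_k g_i$, and since $\hat g_i = c_i/p_k$ the event $|\hat g_i - g_i| > \epsilon g_i$ is identical to the multiplicative deviation $|c_i - \E[c_i]| > \epsilon \E[c_i]$; it suffices to bound the latter.

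The indicators $Y_H$ are not independent, but $Y_H$ and $Y_{H'}$ share randomness only when $V_H \cap V_{H'}\neq\emptyset$, since the coloring is drawn independently across vertices. I would therefore apply a Hoeffding/Chernoff-type bound for sums of Bernoulli indicators with bounded dependency. Concretely, let $D$ be the dependency graph on the $g_i$ copies (edges between vertex-sharing pairs) and let $\nu$ denote its chromatic number. A proper $\nu$-coloring partitions $c_i$ into $\nu$ sub-sums of independent Bernoulli$(p_k)$ variables; I would apply a multiplicative Chernoff bound to each sub-sum and take a union bound over the $\nu$ classes, producing a tail of the form $2\exp(-\Omega(\epsilon^2 p_k g_i / \nu))$.

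It remains to show $\nu = O((k-1)!\,\Delta^{k-2})$. Greedy coloring gives $\nu \le 1 + \Delta^*$, where $\Delta^*$ is the maximum degree of $D$. Since a copy through a vertex $v$ is adjacent in $D$ only to the other copies meeting one of its $k$ vertices, $\Delta^* \le k\cdot N_{\max}$, where $N_{\max}$ is the maximum number of copies of $H_i$ through a fixed vertex of $G$. A plain BFS extension rooted at $v$ gives $N_v \le (k-1)!\,\Delta^{k-1}$, one factor of $\Delta$ too many. I would tighten this by anchoring each copy through $v$ on an edge of $H_i$ incident to $v$: the two endpoints of the anchor collapse one BFS step, leaving only $k-2$ free vertices each contributing at most a factor $\Delta$, with the $(k-1)!$ absorbing the orderings of the spanning tree. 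Substituting back into the tail bound gives $2\exp\bigl(-2\epsilon^2 p_k g_i / ((k-1)!\,\Delta^{k-2})\bigr)$ once the constants from the union bound are folded into the numerical factor.

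The main obstacle is the combinatorial bound on $N_v$: the straightforward BFS argument naturally yields $\Delta^{k-1}$, and extracting the extra $1/\Delta$ factor requires the edge-anchoring refinement together with careful bookkeeping of how many spanning-tree orderings a given connected $k$-subset admits (to avoid both double counting and under counting across isomorphism classes of $H_i$). A secondary delicate point is the concentration step: to land $p_k$ (rather than $p_k^2$) in the exponent I would use the multiplicative Chernoff form on each sub-sum, which is natural since each is a sum of independent Bernoulli$(p_k)$ indicators with mean $p_k n_j$, and then reassemble the bound across color classes without losing the leading constant $2$ in the exponent.
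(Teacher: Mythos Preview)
Your overall plan---reduce to concentration of $c_i=\sum_H Y_H$ and exploit the limited dependency among the $Y_H$---is the same as the paper's, but your dependency graph is wrong, and this is exactly the source of the extra factor of $\Delta$ you are fighting. You put an edge in $D$ whenever $V_H\cap V_{H'}\neq\emptyset$. In fact, if $|V_H\cap V_{H'}|=1$ then $Y_H$ and $Y_{H'}$ are \emph{independent}: each event ``$H$ is colorful'' is invariant under permutations of the $k$ colors, so conditioning on the color of the single shared vertex does not change either probability, and the remaining vertices of $H$ and $H'$ are disjoint. The correct dependency graph therefore has an edge only when $|V_H\cap V_{H'}|\ge 2$. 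This is precisely how the paper proceeds: it invokes the Dubhashi--Panconesi concentration inequality for dependent variables, and bounds the relevant parameter by $1+\max_{u,v\in G} g(u,v)$, where $g(u,v)$ is the number of copies of $H_i$ containing \emph{both} $u$ and $v$. With two vertices fixed, only $k-2$ further vertices need to be grown along a spanning tree, giving $g(u,v)\le (k-1)!\,\Delta^{k-2}-1$ and hence the claimed exponent.

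Your ``edge-anchoring'' patch does not recover this. Anchoring on an edge of $H_i$ incident to $v$ fixes which abstract edge of $H_i$ is used, but the second endpoint in $G$ is not fixed: you still pay up to $\Delta$ to choose it, and then $\Delta^{k-2}$ for the remaining vertices, landing back at $\Delta^{k-1}$. There is no way to get $N_v=O(\Delta^{k-2})$ for the number of copies through a single vertex; a star center already has $\binom{\Delta}{k-1}=\Theta(\Delta^{k-1})$ copies through it. The saving comes from the pair-sharing observation, not from a sharper single-vertex count. Once you replace your $D$ by the sparser graph (edges only when $\ge 2$ vertices are shared), your chromatic-number route would also go through: the degree of a copy in $D$ is at most $\binom{k}{2}\cdot\max_{u,v}g(u,v)$, and greedy coloring then yields the desired $\Delta^{k-2}$ dependence.
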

In practice, $\hat{g}_i$ appears always concentrated.
In other words the coloring does not introduce a significant distortion.
Moreover, if one averages over $\gamma$ independent colorings, the probabilities in the bounds decrease exponentially with $\gamma$.

\section{Speeding up color coding}
\label{sec:speed}
We detail step-by-step the data structures and optimizations that are at the heart of \motivo's efficiency.
As a baseline for our comparisons, we ported CC in \Cpp\ (CC is originally written in Java), using the sparse hash tables from the sparsehash library\footnote{\url{https://github.com/sparsehash/sparsehash}}.
We then incrementally added/replaced its components, measuring their impact as we move from the porting of CC to \motivo.\footnote{The baseline \Cpp\ porting is actually \emph{slower} than CC, sometimes by an order of magnitude. We suspect this is due to the hash tables (CC uses \url{http://fastutil.di.unimi.it/}).}

\subsection{Succinct data structures}
\label{sub:ds}
The main objects manipulated by CC and \motivo\ are rooted colored treelets and their associated counts, which are stored in the treelet count table.
We first describe their implementation in CC, then introduce the one of \motivo.

\textbf{The internals of CC.}
In CC, each $\tc$ has a unique \emph{representative instance}, that is a classic pointer-based tree data structure equipped with a structure storing the colors.
The pointer to this instance acts as unique identifier for $\tc$.
The treelet count table of CC is then implemented as follows: for each $v \in G$, a hash table maps the pointer of each $\tc$ to the count $c(\tc,v)$, provided $c(\tc,v)>0$.
Thus, each entry uses 128 bits -- 64 for the pointer and 64 for the count -- plus the overhead of the hash table.
For computing $c(\tc,v)$, CC processes every neighbor $u \sim v$ as follows (see also Section~\ref{sub:build}).
For every pair of counts $c(T'_{C'}, v)$ and $c(T''_{C''}, u)$ in the hash tables of $v$ and $u$, check that $C' \cap C'' = \emptyset$, and that $T''_{C''}$ comes before the smallest subtree of $T'_{C'}$ in the total order of the treelets (see below).
If these conditions hold, then $T'_{C'}$ and $T''_{C''}$ can be merged into a treelet $\tc$ whose unique decomposition yields precisely $T'_{C'}$ and $T''_{C''}$.
Then, the value of $c(\tc,v)$ in the hash table of $v$ is incremented by $c(T'_{C'},v)	 \cdot c(T''_{C''},u)$.
The expensive part is the check-and-merge operation, which CC does with a recursive algorithm on the treelet representative instances.
This has a huge impact, since on a graph with a billion edges the check-and-merge is easily performed trillions of times.

\textbf{Motivo's treelets.}
Let us now describe \motivo's data structures, starting with an uncolored treelet $T$ rooted at $r \in T$.
We encode $T$ with the binary string $s_{T}$ defined as follows. 
Perform a DFS traversal of $T$ starting from $r$. 
Then the $i$-th bit of $s_{T}$ is $1$ (resp.\ $0$) if the $i$-th edge is traversed moving away from (resp.\ towards) $r$.
For all $k \le 16$, this encoding takes at most $30$ bits, which fits nicely in a $4$-byte integer type (padded with $0$s).
The lexicographic ordering over the $s_{T}$'s gives a total ordering over the $T$'s that is exactly the one used by CC.
This ordering is also a tie-breaking rule for the DFS traversal: the children of a node are visited in the order given by their rooted subtrees.
This implies that every $T$ has a well-defined unique encoding $s_{T}$.
Moreover, merging $T'$ and $T''$ into $T$ requires just concatenating $1,s_{T''},s_{T'}$ in this order.
This makes check-and-merge operations extremely fast (see below).

This succinct encoding supports the following operations:
\begin{itemize}
\item \texttt{getsize()}: return the number of vertices in $T$. This is one plus the Hamming weight of $s_{T}$, which can be computed in a single machine instruction (e.g., \texttt{POPCNT} from the SSE4 instruction set).
\item \texttt{merge($\ta$, $\tb$)}: merge two treelets $\ta$, $\tb$ by appending $\tb$ as a child of the root of $\ta$. This requires just to concatenate $1,s_{\tb},s_{\ta}$ in this order.
\item \texttt{decomp($T$)}: decompose $T$ into $\ta$ and $\tb$. This is the inverse of \texttt{merge} and is done by suitably splitting $s_{T}$.
\item \texttt{sub($T$)}: compute the value  $\beta_T$ of \eqref{eqn:decomp}, i.e.\ the number of subtrees of $T$ that (i) are isomorphic to the treelet $\tb$ of the decomposition of $T$, and (ii) are rooted at some child of the root. 
This is done via bitwise \texttt{shift} and \texttt{and} operations on $s_{T}$.
\end{itemize}

A colored rooted treelet $\tc$ is encoded as the concatenation $s_{\tc}$ of $s_{T}$ and of the characteristic vector $s_C$ of $C$.\footnote{Given an universe $U$, the characteristic vector $\langle x_1, x_2, \dots \rangle$ of a subset $S \subseteq U$ contains one bit $x_i$ for each element $i \in U$, which is $1$ if $i \in S$ and $0$ otherwise.}
For all $k \le 16$, $s_{\tc}$ fits in 46 bits.
Set-theoretical operations on $C$ become bitwise operations over $s_C$ (\texttt{or} for union, \texttt{and} for intersection).
Finally, the lexicographical order of the $s_{\tc}$'s induce a total order over the $\tc$'s, which we use in the count table (see below).
An example of a colored rooted treelet and its encoding is given in Figure~\ref{fig:treelet1} (each node  labelled with its color).
\begin{figure}[h]
\centering
\scalebox{0.9}{%\begin{tikzpicture}
%\Tree
%[.a     
%    [.b ]
%    [.c 
%    \edge[blank]; \node[blank]{};
%    \edge[]; [.d
%             \edge[]; {e}
%             \edge[blank]; \node[blank]{};
%         ]
%    ]
%]
%\end{tikzpicture}

\begin{tikzpicture}[level distance=0.6cm,
sibling distance=1cm, minimum size=11pt,inner sep=1.5,outer sep=0,
level 1/.style={sibling distance=0.9cm, minimum size=12pt,outer sep=0},
level 2/.style={sibling distance=0.9cm, minimum size=12pt,outer sep=0}]

\tikzstyle{every node}=[circle,draw,minimum size=5pt]
%\tikzstyle{tree node}=[circle,draw,minimum size=5pt]

\node (Root)  {3}
    child {
    node {1} 
    child { node {2} }
    child { node {7} }
}
child {
    node {5}
%    child { node {6} }
};
\end{tikzpicture}
\hspace*{5pt}
\begin{tikzpicture}
\matrix[matrix of nodes,row sep=0mm,set common column={1,...,16}{nodes={rectangle,draw,minimum width=1em,inner sep=2.8pt}}] (O)
{
1 & 1 & 0 & 1 & 0 & 0 & 1 & 0 &    1 & 0 & 1 & 0 & 1 & 1 & 1 & 0\\
};
\draw[<->](-2.9,-0.4) -- node[below] {$s_{T}$} (-0.1,-0.4);
\draw[<->](0.1,-0.4) -- node[below] {$s_C$} (2.9,-0.4);
\end{tikzpicture}}
\caption{A colored rooted treelet and its encoding, shown for simplicity on just $8+8=16$ bits.
}
\label{fig:treelet1}
\end{figure}
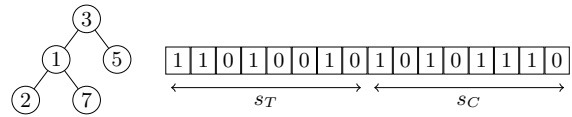

\textbf{Impact.}
The impact of succinct treelets is depicted in Figure~\ref{fig:succint1}, showing the time spent in check-and-merge operations in the build-up phase (single-threaded).
The speedup varies, but is close to $2\times$ on average.
\begin{figure}[h!]
\centering
\includegraphics[scale=.8]{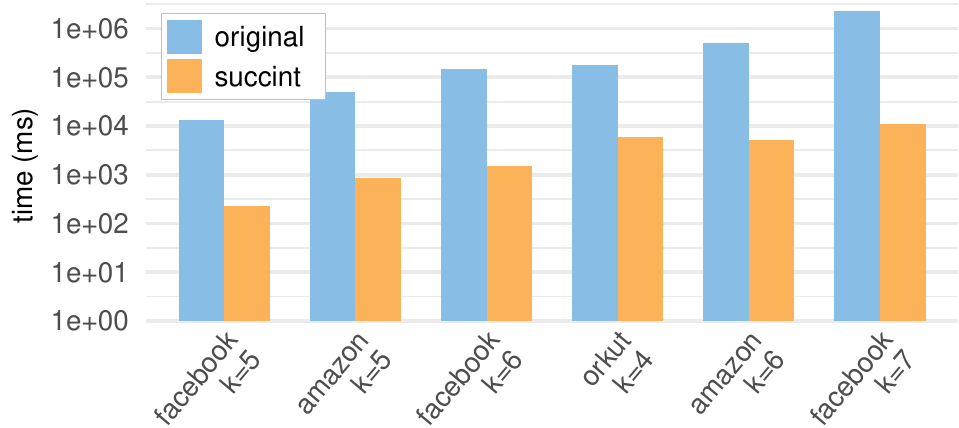}
\caption{impact of succinct treelets: time spent in check-and-merge operations
(logarithmic scale).}
\label{fig:succint1}
\end{figure}

\textbf{Motivo's count table.}
In CC, treelet counts are stored in $n$ hash tables, one for each node $v \in G$.
In each table, the pair $(\tc, c(\tc, v))$ is stored using the pointer to the representative instance of $\tc$ as key.
This imposes the overhead of dereferencing a pointer before each check-and-merge operation to retrieve the actual structure of $\tc$.
Instead of using a hash table, \motivo\ stores the key-value pairs $(s_{\tc}, c(\tc, v))$ in a set of arrays, one for each $v \in G$ and each treelet size $h \in [k]$, sorted by the lexicographical order of the keys $s_{\tc}$.
This makes iterating over the counts extremely fast, and eliminates the need for dereferencing, since each key $s_{\tc}$ is itself an explicit representation of $\tc$.
The result is a large speedup in the build-up phase (see the experiments below).
The price to pay is that searching for a given $\tc$ in the count table requires a binary search.
However, this still takes only $O(k)$, since the whole record has length $O(6^k)$.\footnote{By Cayley's formula: there are $O(3^k k^{-3/2})$ rooted treelets on $k$ vertices~\cite{otter1948number}, and $2^k$ subsets of $k$ colors.}
Note that \motivo\ uses 128-bit counts\footnote{Tests on our machine show that summing 500k unsigned integers is 1.5$\times$ slower with 128-bit than with 64-bit integers.}, whereas CC uses 64-bit counts which often cause overflows (consider that just the number of $6$-stars centered in a node of degree $2^{16}$ is $\approx 2^{80}$).
This increases by 64 bits the space per pair compared to CC; however, \motivo\ saves 16 bits per pair by packing $s_{\tc}$ into 48 bits, using a total of 176 bits per pair.
Finally, in place of $c(\tc, v)$, \motivo\ actually stores the cumulative count $\eta(\tc,v) = \sum_{T'_{C'} \le \tc} c(T'_{C'}, v)$.
In this way each $c(\tc, v)$ can be recovered with negligible overhead, and the total count for a single node $v$ (needed for sampling) is just at the end of the record.

\motivo's count table supports the following operations:
\begin{itemize}
\item \texttt{occ($v$)}: return the total number of colorful treelet occurrences rooted at $v$. Running time $O(1)$.
\item \texttt{occ($\tc, v$)}: return the number of occurrences of $\tc$ rooted at $v$. Running time $O(k)$ via binary search. 
\item \texttt{iter($T,v$)}: get an iterator to the counts of an uncolored treelet $T$ rooted at $v$. Running time $O(k)$, plus $O(1)$ per accessed treelet.
\item \texttt{iter($\tc,v$)}: get an iterator to the counts of a colored treelet $\tc$ rooted at $v$. Running time $O(k)$, plus $O(1)$ per accessed treelet.
\item \texttt{sample($v$)}: returns a random colored treelet $\tc$ with probability proportional to $c(\tc, v)/ \eta_v$. This is used in the sampling phase. Running time $O(k)$: first we get $\eta_v$ in $O(1)$ (see above), then in $O(k)$ we draw $R$ u.a.r.\ from $\{1, \dots, \eta_v \}$, we search the first pair $(\tc, \eta)$ with $\eta \ge R$, and we return $\tc$.
\end{itemize}

\textbf{Greedy flushing.}
The compact treelet count table allows us to match the memory used by CC after porting it in \Cpp\ (see above), but with large gains in computing time and with 128-bit counts support.
To further reduce memory usage, we use an greedy flushing strategy.
Suppose we are currently building the table for treelets of size $h$.
While being built, the record of $v$ is actually stored in a hash table, which allows for efficient insertions.
However, immediately after completion it is stored on disk in the compact form described above, but still unsorted.
The hash table is then emptied and memory released.
When all records have been stored on disk, a second I/O pass sorts them by key.
At the end, the treelet count table is stored on disk without having entirely resided in memory.
In practice, the sorting takes less than 10\% of the total time in all our runs.

\vspace*{1em}
\textbf{Impact.}
Figure~\ref{fig:succint2} compares the \Cpp\ porting of CC before and after adopting succinct treelets, compact count table, and greedy flushing.
The memory usage is given by the maximum resident set size via the Unix command \texttt{time}.
It should be noted that, in our measurements, CC spends $\approx 50\%$ of its running time in check-and-merge operations.
This means that succinct treelets account for roughly a half of the reduction in running time; the rest is brought by the compact count table and greedy flushing.
\begin{figure}[h!]
\centering
\hspace*{-5pt}\includegraphics[scale=.8]{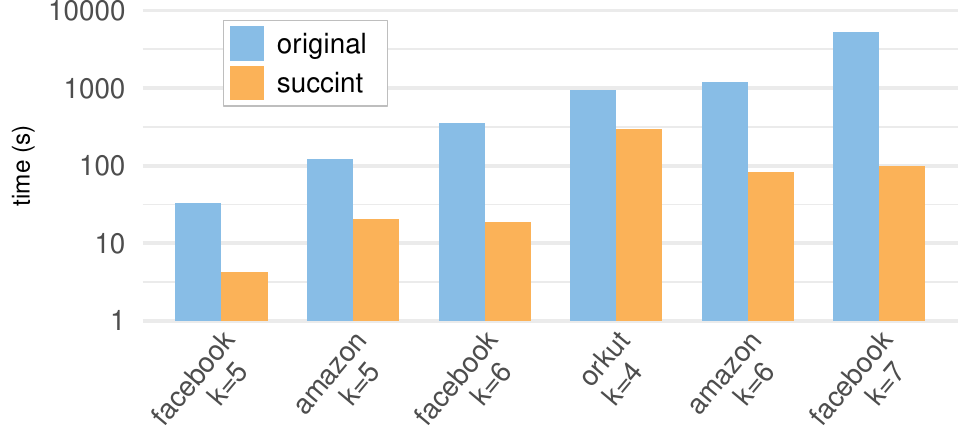}
\\[10pt]
\includegraphics[scale=.79]{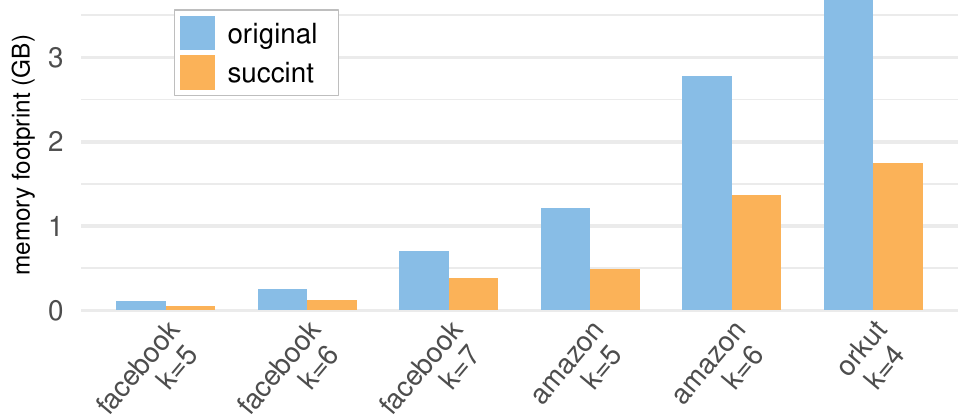}
\caption{impact of succinct treelets, compact count table, and greedy flushing, on the build-up phase.}
\label{fig:succint2}
\end{figure}

\subsection{Other optimizations}
\textbf{0-rooting.}
Consider a colorful treelet copy in $G$ that is formed by the nodes $v_1,\ldots,v_h$.
In the count table, this treelet is counted in each one of the $h$ records of $v_1,\ldots,v_h$, since it is effectively a colorful treelet rooted in each one of those nodes.
Therefore, the treelet is counted $h$ times.
This is inevitable for $h < k$, since excluding some rooting would invalidate the dynamic programming (Equation~\ref{eqn:decomp}).
However, for $h=k$ we can store only one rooting and the sampling works just as fine.
Thus, for $h=k$ we count only the $k$-treelets rooted at their node of color $0$.
This cuts the running time by $30\%-40\%$, while reducing by a factor of $k$ the size of the $k$-treelets records, and by $\approx 10\%$ the total space usage of \motivo.
Figure~\ref{fig:0rooting} depicts the impact of adding $0$-rooting on top of the previous optimizations.
\begin{figure}[h!]
\includegraphics[scale=.78]{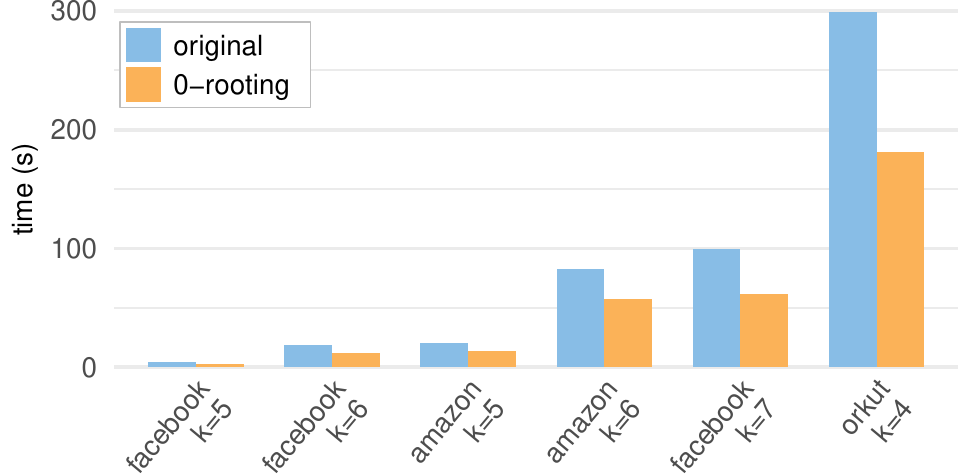}
\caption{impact of $0$-rooting.}
\label{fig:0rooting}
\end{figure}

\textbf{Neighbor buffering.}
Our final optimization concerns sampling.
In most graphs, \motivo\ natively achieves sampling rates of $10$k samples per second or higher.
But on some graphs, such as \bstan\ or \orkut, we get only $100$ or $1000$ samples per second.
The reason is the following.
Those graphs contain a node $v$ with a degree $\Delta$ much larger than any other node.
Inevitably then, a large fraction of the treelets of $G$ are rooted in $v$.
This has two combined effects on the sampling phase (see Subsection~\ref{sub:sampling}).
First, $v$ will be frequently chosen as root.
Second, upon choosing $v$ will spend time $\Theta(\Delta)$ to sweep over its neighbors.
The net effect is that the time to take one sample grows \emph{superlinearly} with $\Delta$, reducing the sampling rate dramatically.
To compensate, we adopt a buffered sampling strategy.
If $d_v \ge 10^4$, then \motivo\ samples $100$ neighbors of $v$ instead of just one, keeping the remaining $99$ cached for future requests.
Sampling $100$ neighbors is as expensive as sampling just one, i.e.\ it takes only a single sweep.
In this way, for large-degree nodes we sweep only $1\%$ of the times.
The impact is depicted in Figure~\ref{fig:buffered}: sampling rates increase by $\approx 20 \times$ on \orkut\ and by $\approx 40\times$ on \bstan.
\begin{figure}[h]
\includegraphics[scale=.78]{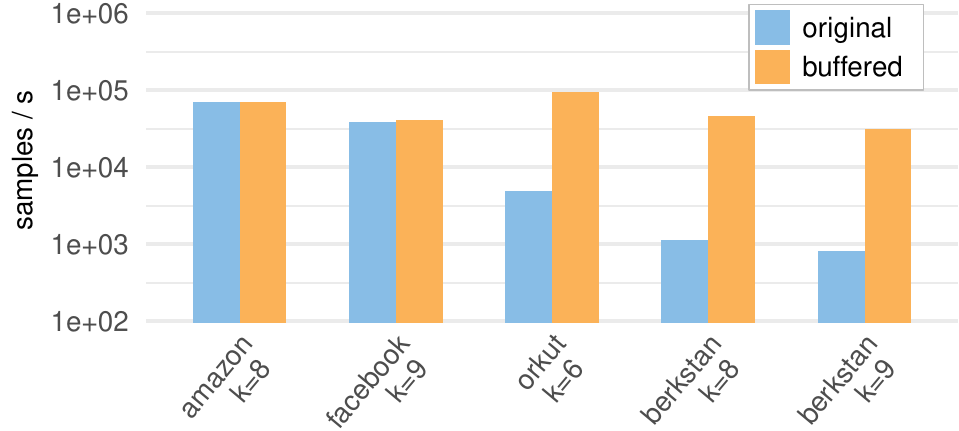}
\caption{impact of neighbor buffering.}
\label{fig:buffered}
\end{figure}

\subsection{Implementation details}
We describe some other implementation details of \motivo\ that, although not necessarily being ``optimizations'', are necessary for completeness and reproducibility.
Whenever possible, we report their impact.

\textbf{Input graph.}
The graph $G$ is stored using the adjacency list representation.
Each list is a sorted static array of the vertex's neighbors; arrays of consecutive vertices are contiguous in memory.
This allows for fast iterations over the set of outgoing edges of a vertex, and for  $O(\log n)$-time edge-membership queries\footnote{This is actually $O(\log \delta(u))$ where $(u,v)$ is the edge being tested, and  $\delta(u)$ is the out-degree of $u$ in $G$. In practice  it is often the case that $\delta(u) \ll n$. }, that we need in the sampling phase to obtain the induced graphlet from the sampled treelet.

\textbf{Multi-threading.}
Similarly to CC, \motivo\ makes heavy use of thread-level parallelism in both the build-up and sampling phases.
For the build-up phase, for any given $v$ the counts $c(\cdot, v)$ can be computed independently from each other, which we do using a pool of threads.
As long as the number of remaining vertices is sufficiently large,
each thread is assigned a (yet unprocessed) vertex $v$ and will compute all the counts $c(\tc, v)$ for all pairs $\tc$.
While this requires minimal synchronization, when the number of unprocessed vertices decreases below the amount of available threads, the above strategy is no longer advantageous as it would cause some of the threads to become idle. 
This can increase the time needed by the build-up phase if $G$ exhibits skewed degree and/or treelet distributions.
To overcome this problem, the last remaining vertices are handled differently: we allow multiple threads to concurrently compute different  summands of the outermost sum of \eqref{eqn:decomp} for the same vertex $v$, i.e., those corresponding to the edges $(v,u) \in E$.
Once all the incident edges of $v$ have been processed, the partial sums are then combined together to obtain all the counts $c(\cdot, v)$.
This reduces the running time by a few percentage points.
For the sampling phase, samples are by definition independent and are taken by different threads.

\textbf{Memory-mapped reads.}
In the build-up phase, to compute the $h$-treelets count table we must access the $j$-treelet count tables for all $j < h$.
For large instances, loading all those tables simultaneously in memory is infeasible.
One option would be to carefully orchestrate I/O and computation, hoping to guarantee a small number of load/store operations on disk.
We adopt a simpler solution: memory-mapped I/O.
This delegates the I/O to the operating system in a manner that is transparent to \motivo, which sees all tables as if they resided in main memory.
%This saves us the effort of explicitly loading the tables, managing buffers and so on.
When enough memory is available this solution gives ideally no overhead.
Otherwise, the operating system will reclaim memory by unloading part of the tables, and future requests to those parts will incur a page fault and prompt a reload from the disk.
The overhead of this approach can be indeed measured via the number of page faults.
This reveals that the total I/O volume due to page faults is less than 100MB, except for $k=8$ on \lj\ (34GB) and \yelp\ (8GB) and for $k=6$ on \fs\ (15GB).
However, in those cases additional I/O is inevitable, as the total size of the tables (respectively 99GB, 90GB, and 61GB) is close to or even larger than the total memory available (60GB).

\textbf{Alias method sampling.}
Recall that, to sample a colorful graphlet from $G$, we first sample a node $v$ with probability proportional to the number of colorful $k$-treelets rooted at $v$ (Subsection~\ref{sub:sampling}). 
We do this in time $O(1)$ by using the alias method~\cite{Vose91}, which requires building an auxiliary lookup table in time and space linear in the support of the distribution.
In our case this means time and space $O(n)$; the table is built during the second stage of the build-up process.
In practice, building the table takes negligible amounts of time (a fraction of a second out of several minutes).

\textbf{Graphlets.}
In \motivo, each graphlet $H$ is encoded as an adjacency matrix packed in a \hbox{$128$-bit} integer.
Since a graphlet is a simple graph, the $k \times k$ adjacency matrix is symmetric with diagonal $0$ and can be packed in a $(k-1) \times \frac{k}{2}$ matrix if $k$ is even and in
a $k \times \frac{k-1}{2}$ matrix if $k$ is odd (see e.g.\ \cite{BaroudiSL17}). 
The resulting triangular matrix can then be reshaped into a $1 \times \frac{k^2 - k}{2}$ vector, which fits into $120$ bits for all $k \le 16$.
In fact, one can easily compute a bijection between the pair of vertices of the graphlet and the indices $\{1, \dots, 120\}$.
Before encoding a graphlet, \motivo\ replaces it with a canonical representative from its isomorphism class, computed using the Nauty library~\cite{McKay201494}.

\textbf{Spanning trees.}
By default, \motivo\ computes the number of spanning trees $\sigma_i$ of $H_i$ in time $O(k^3)$ via Kirchhoff's matrix-tree theorem which relates $\sigma_i$ to the determinant of a $(k-1) \times (k-1)$ submatrix of the laplacian matrix of $H_i$.
To compute the number $\sigma_{ij}$ of occurrences of $T_i$ in $H_j$ (needed for our sampling algorithm AGS, see Section~\ref{sec:ags}), we use an in-memory implementation of the build-up phase.
The time taken is negligible for $k<7$, but is significant for $k\ge 7$.
For this reason, \motivo\ caches the $\sigma_{ij}$ and stores them to disk for later reuse.
In some cases (e.g.\ $k=8$ on \facebook) this accelerates sampling by an order of magnitude.
$T$; when a new $T$ is chosen, the alias sampler must be rebuilt from scratch.

\subsection{Biased coloring}
\label{sub:biased}
Finally, we describe an optimization, that we call ``biased coloring'', that can be used to manage graphs that would otherwise be too large.
Suppose for simplicity that, for each treelet $T$ on $j$ nodes, each $v \in G$ appears in a relatively small number of copies of $T$, say $k^j/j!$.
Then, given a set $C$ of $j$ colors, a copy of $T$ is colored with $C$ with probability $j! / k^j$.
This implies that we will have an expected $\Theta(1)$ copies of $T$ colored with $C$ containing $v$, in which case the total table size (and the total running time) will approach the worst-case space bounds.

Suppose now we bias the distribution of colors.
In particular, we give probability $\lambda \ll \frac{1}{k}$ to each color in $\{1,\ldots,k-1\}$.
The probability that a given $j$-treelet copy is colored with $C$ is then:
\begin{align}
p_{k,j}(C) = \left\{
\begin{array}{rr}
j! \lambda^j &\; \text{if} \; k \notin C \\
\sim j! \lambda^{j-1} &\; \text{if} \; k \in C
\end{array}
\right.
\end{align}
If $\lambda$ is sufficiently small, then, for most $T$ we will have a zero count at $v$; and most nonzero counts will be for a restricted set of colorings -- those containing $k$.
This reduces the number of pairs stored in the treelet count table, and consequently the running time of the algorithm.
The price to pay is a loss in accuracy, since a lower $p_k$ increases the variance of the number $c_i$ of colorful copies of $H_i$.
However, if $n$ is large enough and most nodes $v$ belong to even a small number of copies of $H_i$, then the \emph{total} number of copies $g_i$ of $H_i$ is large enough to ensure concentration.
In particular, by Theorem~\ref{THM:CONC_DEP} the accuracy loss remains negligible as long as $\lambda^{k-1} n / \Delta^{k-2}$ is large (ignoring factors depending only on $k$).
We can thus trade a $\Theta(1)$ factor in the exponent of the bound for a $\Theta(1)$ factor in both time and space, especially on large graphs where saving resources is precious.

For the choice of $\lambda$, we note one can find a good value as follows.
Start with $\lambda = 1/b(k-1)n$ for some appropriate $b>1$.
By Markov's inequality, with probability $1-\frac{1}{b}$ all $v \in G$ have the same color and thus the table count is empty for all $j$.
Grow $\lambda$ progressively until a small but non-negligible fraction of counts are positive.
Then by Theorem~\ref{THM:CONC_DEP} we have achieved concentration, and we can safely proceed to the sampling phase.

\textbf{Impact.}
With $\lambda = 0.001$, the build-up time on \fs\ ($65$M nodes, $1.8$B edges) shrinks from $17$ to $10$ minutes ($1.7\times$) for $k=5$, and from $1.5$ hours to $13$ minutes ($7\times$) for $k=6$.
In both cases, the main memory usage and the disk space usage decrease by at least $2\times$.
The relative graphlet count error increases correspondingly, as shown in Figure~\ref{fig:biased} (see Section~\ref{sec:exp} for the error definition).
For $k=7$, the build takes $20$ minutes -- in this case we have no comparison term, as without biased coloring \motivo\ did not terminate a run within 2 hours.
Note that \fs\ has 30 (500) times the nodes (edges) of the largest graph managed by CC for the same values of $k$~\cite{Bressan&2018b}.
Note that in our experiments (Section~\ref{sec:exp}) biased coloring is disabled since mostly unnecessary.

\begin{figure}[h]
\includegraphics[scale=.71]{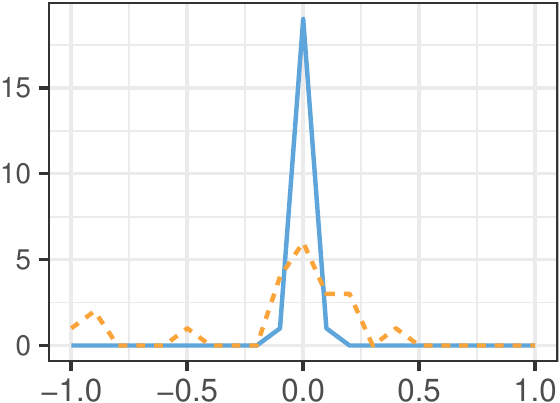}
\hfill
\includegraphics[scale=.71]{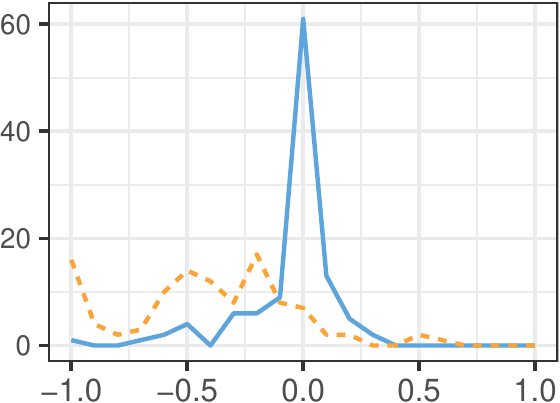}
\caption{Graphlet count error distribution of uniform and biased coloring (dashed), for k=5 and k=6.}
\label{fig:biased}
\end{figure}

\section{Adaptive graphlet sampling}
\label{sec:ags}
This section describes AGS, our adaptive graphlet sampling algorithm for color coding.
Recall that the main idea of CC is to build a sort of ``urn'' supporting a primitive sample() that returns a colorful $k$-treelet occurrence u.a.r.\ from $G$.
The first step of AGS is to ``refine'' this interface with one urn for each possible $k$-treelet shape $T$.
More precisely, for every $k$-treelet shape $T$ our urn should support the following primitive:
\begin{itemize}
\item \sample($T$): return a colorful copy of $T$ u.a.r.\ from $G$
\end{itemize}
We can implement \sample($T$) as explained below.
With \sample($T$) one can selectively sample treelets of different shapes, and this can be used to virtually ``delete'' undesired graphlets from the urn.
Let us try to convey the idea with a simple example.
Imagine $G$ contains just two types of colorful graphlets, $H_1$ and $H_2$, of which $H_2$ represents a tiny fraction $p$ (say 0.01\%).
Using our original primitive, \sample(), we will need $\Theta(1/p)$ calls before finding $H_2$.
Instead, we start using \sample($T_1$), until we estimate accurately $H_1$.
At this point we switch to \sample($T_2$), which completely ignores $H_1$ (since it is not spanned $T_2$), until we estimate accurately $H_2$ as well.
In this way we can estimate accurately both graphlets with essentially $O(1)$ samples.
Clearly, in general we have more than just two graphlets, and distinct graphlets may have the same spanning trees.
Still, our approach gives an adaptive sampling strategy (AGS) that performs surprisingly better than naive sampling in the presence of rare graphlets.
AGS yields multiplicative guarantees on all graphlets, while taking only $O(k^2)$ times the minimum number of samples any algorithm  must take (see below).

We can now turn to describe AGS in more detail.
Initially, we choose the $k$-treelet $T$ with the largest number of colorful occurrences.
Recall from Section~\ref{sub:cc} that for every $v \in G$ we know \texttt{occ($T,v$)}, the number of colorful copies of $T$ rooted at $v$.
Then it is not hard to see that, after some preprocessing, one can restrict the sampling process described in Subsection~\ref{sub:sampling} to the occurrences of $T$, thus drawing u.a.r.\ from the set of colorful copies of $T$.
This gives our primitive \sample($T$).
We then start invoking \sample($T$) until, eventually, some graphlet $H_i$ spanned by $T$ appears enough times, say $\Theta(\frac{1}{\epsilon^2}\ln(\frac{1}{\delta}))$.
We then say $H_i$ is \emph{covered}.
Now, since we do not need more samples of $H_i$, we would like to continue with \sample($T'$) for some $T'$ that does \emph{not} span $H_i$, as if we were asking to ``delete'' $H_i$ from the urn.
More precisely, we seek $T'$ that minimizes the probability that by calling \sample($T'$) we observe $H_i$.

The crux of AGS is that we can find $T'$ as follows.
First, we estimate the number $g_i$ of colorful copies of $H_i$ in $G$, which we can do since we have enough samples of $H_i$.
Then, for each $k$-treelet $T_j$ we estimate the number of copies of $T_j$ that span a copy of $H_i$ in $G$ as $g_i \sigma_{ij}$, where $\sigma_{ij}$ is the number of spanning trees of $H_i$ isomorphic to $T_j$.
We then divide this estimate by the number $r_j$ of colorful copies of $T_j$ in $G$, obtained summing \texttt{occ($T_j,v$)} over all $v \in G$.
The result is an estimate of the probability that \sample($T_j$) spans a copy of $H_i$, and we choose the treelet $T_{j^*}$ that minimizes this probability.
More in general, we need the probability that \sample($T_j$) spans a copy of \emph{some} graphlet among the ones covered so far, and to estimate $g_i$ we must take into account that we have used different treelets along the sampling.

The pseudocode of AGS is listed below. 
A graphlet is marked as covered when it has appeared in at least $\bar{c}$ samples.
For a  union bound over all $k$-graphlets one would set $\cerr = O(\frac{1}{\epsilon^2}\ln(\frac{s}{\delta}))$ where $s=s_k$ is the number of distinct $k$-graphlets.
In our experiments we set $\bar c = 1000$, which seems sufficient to give good accuracies on most graphlets.

\renewcommand{\thealgorithm}{}
\begin{algorithm}[h!]
\caption{AGS($\epsilon, \delta$)}
\begin{algorithmic}[1]
\small
\State $(c_1,\ldots,c_s) \leftarrow (0,\ldots,0)$ \Comment{graphlet counts}
\State $(w_1,\ldots,w_s) \leftarrow (0,\ldots,0)$ \Comment{graphlet weights}
\State $\cerr \leftarrow \lceil\frac{4}{\epsilon^2}\ln(\frac{2s}{\delta})\rceil$ \Comment{covering threshold}
\State $C \leftarrow \emptyset$ \Comment{graphlets covered}
\State{$T_j \leftarrow$ an arbitrary treelet type}
\While{$|C| < s$}
\For{each $i'$ in $1,\ldots,s$}
\State $w_{i'} \leftarrow w_{i'} + \sigma_{ji'}/r_j$ \label{ags:w_j}
\EndFor
\State $T_G \leftarrow$ an occurrence of $T_j$ drawn u.a.r.\ in $G$
\State $H_i \leftarrow$ the graphlet type spanned by $T_G$ \label{ags:h_j}
\State $c_i \leftarrow c_i + 1$ \label{ags:c_j}
\If{$c_i \ge \cerr$} \Comment{switch to a new treelet $T_j$}
  \State $C \leftarrow C \cup i$
  \State $j^* \leftarrow \arg \min_{j=1,\ldots,r} \frac{1}{r_j} \sum_{j \in C} \sigma_{ij} \, c_i / w_i$ \label{ags:estim}
  \State $T_j \leftarrow T_{j^*}$ 
\EndIf
\EndWhile
\State \textbf{return} $(\frac{c_1}{w_1}, \ldots, \frac{c_s}{w_s})$
\end{algorithmic}
\end{algorithm}

\subsection{Approximation guarantees}
We prove that, if AGS chooses the ``right'' treelet $T_{j^*}$, then we obtain multiplicative error guarantees.
Formally:
\begin{theorem}
\label{THM:AGS_APX}
If the tree $T_{j^*}$ chosen by AGS at line~\ref{ags:estim} minimizes $\prob[$\sample$(T_j)$ spans a copy of some $H_i \in C]$ then, with probability $(1-\delta)$, when AGS stops $c_i/w_i$ is a multiplicative $(1\pm\epsilon)$-approximation of $g_i$ for all $i=1,\ldots,s$.
\end{theorem}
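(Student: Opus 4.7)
The plan is to view $c_i/w_i$ as a stopped-martingale estimator of the number of colorful copies of $H_i$; the translation to the uncolored count $g_i$ is the usual $1/p_k$ rescaling of color coding (Section~\ref{sub:sampling}), so below I write $g_i$ for the colorful count with the understanding that the final bound carries through this deterministic factor. Fix the coloring and work in the probability space of AGS's internal sampling coins. Let $\mathcal{F}_{t-1}$ be the filtration through round $t-1$; the treelet $T_{j_t}$ picked at line~\ref{ags:estim} is $\mathcal{F}_{t-1}$-measurable, and conditional on $\mathcal{F}_{t-1}$ the call \sample$(T_{j_t})$ returns a uniformly random member of the $r_{j_t}$ colorful copies of $T_{j_t}$. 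Exactly $g_i \sigma_{j_t,i}$ of those copies span an occurrence of $H_i$, so the indicator $\chi_t^{(i)}$ that round $t$ sees $H_i$ satisfies $\E[\chi_t^{(i)}\mid \mathcal{F}_{t-1}] = g_i \cdot (\sigma_{j_t,i}/r_{j_t})$, which is exactly $g_i$ times the deterministic increment added to $w_i$ at line~\ref{ags:w_j}. Hence $M_t^{(i)} := c_i^{(t)} - g_i\, w_i^{(t)}$ is a martingale with $|M_t^{(i)} - M_{t-1}^{(i)}|\leq 1$ and predictable quadratic variation $V_t^{(i)} \leq g_i\, w_i^{(t)}$.

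Next I would concentrate $M_t^{(i)}$ at the stopping time $\tau_i := \min\{t : c_i^{(t)} = \cerr\}$ by applying Freedman's inequality (or an equivalent Bernstein bound for martingales with bounded increments). The multiplicative target $|c_i^{(\tau_i)}/w_i^{(\tau_i)} - g_i| \leq \epsilon\, g_i$ is equivalent to $|M_{\tau_i}^{(i)}| \leq \epsilon\, g_i\, w_i^{(\tau_i)}$, which, using $c_i^{(\tau_i)}=\cerr$, further rewrites as the two-sided containment $g_i\, w_i^{(\tau_i)} \in [\cerr/(1+\epsilon),\, \cerr/(1-\epsilon)]$. On this event we have $V_{\tau_i}^{(i)} \leq \cerr/(1-\epsilon)$, so Freedman's inequality applied to the stopped process $M_{t\wedge \tau_i}^{(i)}$ with variance budget $V = O(\cerr)$ and deviation $\lambda = \epsilon\, \cerr$ yields a failure probability of $\exp(-\Omega(\epsilon^2\cerr))$. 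The choice $\cerr = \lceil(4/\epsilon^2)\ln(2s/\delta)\rceil$ pushes this below $\delta/s$, and a union bound over $i = 1,\dots,s$ delivers the simultaneous $(1\pm\epsilon)$ approximation with probability at least $1-\delta$.

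The hardest part is juggling the random stopping time with the self-bounded variance. Textbook Chernoff bounds live at deterministic horizons, whereas $\tau_i$ depends on the entire coverage history and, crucially, the target relative error scales with the random quantity $w_i^{(\tau_i)}$ that also drives the variance bound. I plan to resolve this in the standard way: a dyadic peeling argument on $w_i^{(\tau_i)}$ combined with Freedman on each shell, together with a small implicit-inequality solve to untangle variance and deviation — the resulting $O(\log\log)$ overhead is absorbed in the constants defining $\cerr$. A secondary issue is termination: the theorem's hypothesis that $T_{j^\ast}$ minimises the expected collision with covered graphlets is precisely what ensures that, at every stage, sampling effort is steered maximally toward uncovered graphlets, so that $\tau_i < \infty$ almost surely for every $i$ and AGS eventually reaches $|C|=s$. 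This hypothesis plays no role in the concentration argument itself; it is invoked only to certify that the moment $\tau_i$ at which we invoke Freedman is actually attained.
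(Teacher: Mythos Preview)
Your approach is correct and matches the paper's: both set up the martingale $M_t = c_i - g_i w_i$ with increments in $[-1,1]$ and predictable variance $V_t \le g_i w_i$, and both apply Freedman's inequality. Your observation that the hypothesis on $T_{j^{*}}$ plays no role in the concentration itself is also correct; the paper's proof never invokes it.

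The one place you diverge is in handling the random stopping time. Your dyadic peeling on $w_i^{(\tau_i)}$ would work, but the paper avoids it entirely and incurs no $\log\log$ loss. It uses Freedman in the maximal form $\Pr[\exists\, t: Z_t \ge z,\; V_t \le v] \le \exp(-z^2/2(v+z))$ and notes that the bad event at the stopping time \emph{itself} certifies both clauses. For the upper tail, $c_i/w_i \ge (1+\epsilon)g_i$ together with $c_i=\cerr$ gives simultaneously $Z_t \ge \cerr\,\epsilon/(1+\epsilon)$ and $V_t \le g_i w_i \le \cerr/(1+\epsilon)$, so one plugs in $z=\cerr\,\epsilon/(1+\epsilon)$, $v=\cerr$ directly. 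For the lower tail the variance is not a priori bounded, but instead of peeling the paper passes to the \emph{first} time $t_0$ at which $-Z_{t_0} \ge \cerr\,\epsilon/(1-\epsilon)$; since increments are bounded by $1$ and $\sum_{\tau\le t_0} X_\tau \le \cerr$, one gets $V_{t_0} \le \sum_{\tau\le t_0} P_\tau = -Z_{t_0} + \sum_{\tau\le t_0} X_\tau < \cerr/(1-\epsilon)+1$, and Freedman applies with those explicit $z,v$. This ``variance is self-bounded at the first crossing'' trick replaces your peeling and keeps the constants clean enough that $\cerr=\lceil (4/\epsilon^2)\ln(2s/\delta)\rceil$ suffices without slack.
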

The proof requires a martingale analysis and is deferred to Appendix~\ref{apx:proof_ags_apx}.
We stress that the guarantees hold for all graphlets, irrespective of their relative frequency.
In practice, AGS gives accurate counts for many or almost all graphlets at once, depending on the graph (see Section~\ref{sec:exp}).

\subsection{Sampling efficiency}
Let us turn to the sampling efficiency of AGS.
We start by showing that, on some graphs, AGS does no better than naive sampling, but that this holds for \emph{any} algorithm based on \sample($T$).
Formally:
\begin{theorem}
\label{thm:samplelb}
There are graphs $G$ where some graphlet $H$ represents a fraction $p_H = 1/\operatorname{poly}(n)$ of all graphlet copies, and any algorithm needs $\Omega(1/p_H)$ invocations of \sample$(T)$ in expectation to just find one copy of $H$.
\end{theorem}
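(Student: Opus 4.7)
The plan is to exhibit a single worst-case graph $G$ on roughly $n$ vertices together with a rare distinguished graphlet $H$, and show that for every $k$-treelet shape $T$ a single call to $\sample(T)$ returns a copy spanning $H$ with probability only $O(p_H)$; an adaptive union bound over the queries then converts this into an $\Omega(1/p_H)$ lower bound on the expected number of samples.

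For the construction I would take $G$ to be the disjoint union of a clique $K_n$ and a single copy of a fixed connected $k$-vertex graph $H$ that is not isomorphic to $K_k$ (for concreteness, a path on $k$ fresh vertices). Since every induced connected $k$-vertex subgraph of $G$ lies entirely in one of the two components, the induced graphlet counts are exactly $g_{K_k}=\binom{n}{k}$ and $g_H=1$, giving $p_H = 1/(1+\binom{n}{k}) = \Theta(n^{-k})$, which is indeed $1/\operatorname{poly}(n)$ as required.

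Next I would bound the per-sample hit probability. Fix a coloring $c$ of $G$ and a $k$-treelet shape $T$, and let $N_T(c)$ and $N_T^H(c)$ be the numbers of colorful copies of $T$ in $G$ and inside the $H$-component, respectively. Deterministically $N_T^H(c)\le k!$, since $H$ admits at most $k!$ labeled embeddings of any $k$-vertex tree. On the other hand $K_n$ alone contributes in expectation $\binom{n}{k}\operatorname{aut}(T)\cdot p_k = \Theta(n^k)$ colorful copies of $T$; the covariances between the colorful-indicator variables of two $k$-subsets of $V(K_n)$ decay quickly in their overlap, so Chebyshev (or a Chernoff-type argument in the spirit of Theorem~\ref{thm:color_conc}) yields $N_T(c) = \Theta(n^k)$ with probability $1-o(1)$. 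A union bound over the $O(1)$ treelet shapes then produces, with probability $1-o(1)$, a \emph{good} coloring on which $N_T(c)=\Theta(n^k)$ simultaneously for every $T$, and under such a coloring $N_T^H(c)/N_T(c) = O(1/n^k) = O(p_H)$ uniformly in $T$.

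To finish, let $X$ be the first query in which an arbitrary, possibly adaptive, algorithm observes a copy of $H$; its $t$-th query $\sample(T_t)$ may depend on past outcomes, but conditional on a good coloring the per-query hit probability is at most $\alpha\, p_H$ for some constant $\alpha>0$, regardless of $T_t$. A union bound over the first $t$ steps gives $\prob[X\le t\mid \text{good }c] \le \alpha t p_H$, hence $\E[X\mid \text{good }c] = \Omega(1/p_H)$; averaging over colorings and using that good colorings have probability $1-o(1)$ concludes the proof. The main obstacle is precisely the uniform lower bound on $N_T(c)$, since the colorful-count indicators over $k$-subsets of $K_n$ are not independent; standard second-moment estimates---or a direct application of Theorem~\ref{thm:color_conc} to the clique---suffice, and if one prefers a fully deterministic argument one can replace $K_n$ by a vertex-transitive host graph whose colorful $T$-counts are equal by symmetry.
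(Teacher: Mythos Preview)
Your argument is correct, but it diverges from the paper's in both construction and emphasis. The paper takes $G$ to be the $(n-k+2,k-2)$ lollipop graph and lets $H$ be the induced $k$-path, of which there are $\Theta(n)$ copies. The crucial simplification is that the path has a \emph{unique} spanning tree shape, so an algorithm has no useful freedom in choosing $T$: every call must be $\sample(\text{path})$, and since there are $\Theta(n^k)$ non-induced paths but only $\Theta(n)$ induced ones, the hit probability is $\Theta(n^{1-k})=\Theta(p_H)$. This sidesteps your adaptive union bound entirely and never invokes concentration of the coloring; the paper simply reads off the ratio of (expected) non-induced to induced path counts.

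Your route---a clique plus a single planted copy of $H$, then a uniform $O(p_H)$ bound on the hit probability of \emph{every} treelet shape via concentration of $N_T(c)$---is more general (it would work even if $H$ had several spanning-tree shapes) and is more explicit about the role of the random coloring, which the paper glosses over. The cost is extra machinery: you need the second-moment estimate, the union bound over shapes, and the good-coloring conditioning. Two small points worth noting: your $G$ is disconnected, which is slightly outside the intended setting and trivially inflates the lower bound (with constant probability the planted $H$ is not colorful and $\E[X]=\infty$); adding a single edge from the clique to the path fixes this and essentially recovers the lollipop. Also, your final averaging step is fine precisely because $\E[X\mid\text{bad }c]\ge 0$, so bad colorings can only help the lower bound.
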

\begin{proof}
Let $T$ and $H$ be the path on $k$ nodes.
Let $G$ be the $(n-k+2,k-2)$ lollipop graph; so $G$ is formed by a clique on $n-k+2$ nodes and a dangling path on $k-2$ nodes, connected by an arc.
$G$ contains $\Theta(n^k)$ non-induced occurrences of $T$ in $G$, but only $\Theta(n)$ induced occurrences of $H$ (all those formed by the $k-2$ nodes of the dangling path, the adjacent node of the clique, and any other node in the clique).
Since there are at most $\Theta(n^k)$ graphlets in $G$, then $H$ forms a fraction $p_H=\Theta(n^{1-k})$ of these.
Obviously $T$ is the only spanning tree of $H$; however, an invocation of $\operatorname{sample}(G,T)$ returns $H$ with probability $\Theta(n^{1-k})$ and thus we need $\Theta(n^{k-1})=\Theta(1/p_H)$ samples in expectation before obtaining $H$.
One can make $p_H$ larger by considering the $(n',n-n')$ lollipop graph for larger values of $n'$.
\end{proof}
We remark that Theorem~\ref{thm:samplelb} applies to the algorithms of~\cite{Jha&2015,Wang&2015,Wang&2016}, as they are are based on \sample($T$).

Since we cannot give good \emph{absolute} bounds on the samples, we analyse AGS against an optimal, clairvoyant adversary based on \sample($T$).
This adversary is clairvoyant in the sense that it knows exactly how many \sample($T_j$) calls to make for every treelet $T_j$ in order to get the desired bounds with the minimum number of calls.
Formally, we prove:
\begin{theorem}
\label{THM:AGS_COST}
If the tree $T_{j^*}$ chosen by AGS at line~\ref{ags:estim} minimizes $\prob[$\sample$(T_j)$ spans a copy of some $H_i \in C]$, then AGS makes a number of calls to \sample$()$ that is at most $O(\ln(s))=O(k^2)$ times the minimum needed to ensure that every graphlet $H_i$ appears in $\bar c$ samples in expectation.
\end{theorem}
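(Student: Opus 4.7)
The plan is to recast the bound as an approximation guarantee for a fractional multi-cover LP and then apply the classical greedy set-cover analysis. For each treelet $T_j$ and graphlet $H_i$, let $p_{ij} = g_i \sigma_{ij}/r_j$ denote the probability that \sample$(T_j)$ returns a copy of $H_i$; since every invocation returns exactly one graphlet, $\sum_i p_{ij} = 1$. A clairvoyant algorithm that makes $n_j$ calls to \sample$(T_j)$ in expectation observes $H_i$ a total of $\sum_j n_j p_{ij}$ times in expectation, so the theorem's benchmark equals the LP optimum
\[
\mathrm{OPT} \;=\; \min\nolimits_{n\ge 0} \sum_j n_j \quad \text{s.t.} \quad \sum_j n_j \, p_{ij} \,\ge\, \bar c \;\; \forall\, i\in[s].
\]

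Next I would verify that AGS is the online greedy rule for this LP. Let $C_t$ be the set of graphlets already covered after $t$ samples. The rule at line~\ref{ags:estim} picks $T_{j^*}$ minimizing $\sum_{i \in C_t} p_{ij}$, which by $\sum_i p_{ij} = 1$ is equivalent to \emph{maximizing} $\sum_{i \notin C_t} p_{ij}$, i.e.\ the expected per-sample progress toward the uncovered graphlets. Define the residual deficit $\phi_t = \sum_i \max(0,\bar c - c_i^{(t)})$. Because the optimal LP solution $(n_j^*)$ is also feasible for the residual instance (bringing each $c_i$ up to $\bar c$ a fortiori closes any smaller deficit), averaging over $j$ weighted by $n_j^*$ yields
\[
\max_j \sum_{i\notin C_t} p_{ij} \;\ge\; \frac{1}{\mathrm{OPT}} \sum_j n_j^* \sum_{i\notin C_t} p_{ij} \;\ge\; \frac{\phi_t}{\mathrm{OPT}}.
\]
Hence $\mathbb{E}[\phi_{t+1}\mid \mathcal F_t] \le \phi_t(1-1/\mathrm{OPT})$, and iterating gives $\mathbb{E}[\phi_t] \le \phi_0\, e^{-t/\mathrm{OPT}}$.

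Finally I would convert this geometric decay into a bound on $\mathbb{E}[T_{\mathrm{AGS}}]$. Since $\phi_t$ is a non-negative integer and AGS halts exactly when $\phi_t = 0$, Markov gives $\Pr[T_{\mathrm{AGS}}>t] = \Pr[\phi_t\ge 1] \le \mathbb{E}[\phi_t]$, and summing while splitting the tail at $t^\star = \mathrm{OPT}\cdot\ln\phi_0$ yields $\mathbb{E}[T_{\mathrm{AGS}}] = O(\mathrm{OPT}\cdot\ln\phi_0)$. Plugging in $\phi_0 = s\bar c$ with $\bar c = \Theta(\epsilon^{-2}\ln(s/\delta))$ gives $\ln\phi_0 = (1+o(1))\ln s$ in the regime of interest, and the classical bound $s=2^{O(k^2)}$ on the number of connected graphs on $k$ nodes delivers the claimed $O(\ln s)=O(k^2)$ approximation factor.

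The principal obstacle is reconciling the deterministic LP argument with the randomness of AGS: the covered set $C_t$ and the counts $c_i^{(t)}$ evolve stochastically, yet the greedy inequality $\max_j \sum_{i\notin C_t} p_{ij}\ge \phi_t/\mathrm{OPT}$ must be invoked at every step. The resolution is that this inequality holds \emph{pointwise} conditionally on $\mathcal F_t$, because it only uses feasibility of $(n_j^*)$ for whatever residual instance happens to remain; combined with the monotonicity of $C_t$ (so that the same $T_{j^*}$ stays greedy-optimal between threshold crossings, justifying AGS's batched update at line~\ref{ags:estim}), the in-expectation recursion propagates cleanly and yields the bound.
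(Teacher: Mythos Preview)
Your proposal is correct and takes essentially the same route as the paper: both recast the problem as a fractional multi-cover LP, identify AGS's rule with the greedy rule (via $\sum_i p_{ij}=1$), and show geometric decay of the residual deficit to obtain an $O(\mathrm{OPT}\cdot\log(\cdot))$ bound.

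Two differences are worth flagging. First, the paper analyses a \emph{deterministic} greedy on the expected residuals $\bfc^t=\max(\mathbf{0},\bar c\,\mathbf{1}-\mathbf{A}\bfx^t)$, whereas you track the random deficit $\phi_t$ and close the argument with a supermartingale-plus-Markov bound on the stopping time; your treatment of the randomness is arguably more faithful to how AGS actually runs. Second, you obtain $O(\mathrm{OPT}\cdot\ln(s\bar c))$ and then invoke ``the regime of interest'' to absorb $\ln\bar c$ into $\ln s$. The paper avoids the extra $\ln\bar c$ factor directly: it separates steps into Case~1 (some graphlet becomes covered; at most $s$ such steps) and Case~2 (no new coverage, so the full greedy weight $w_{j^*}^t$ is subtracted and the residual contracts by $(1-1/z)$), and then applies a doubling trick (solve the instance with threshold $2\bar c$) so that only $O(z\ln s)$ Case-2 steps are needed before every graphlet has weight $\ge \bar c$. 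If you want the stated $O(\ln s)$ without parameter assumptions, this case split is the missing ingredient; otherwise your argument already delivers the result up to the benign $\ln\bar c$ term.
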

The proof of the theorem relies on a fractional set cover and can be found in Appendix~\ref{apx:proof_ags_cost}.

\section{Experimental results}
\label{sec:exp}
In this section we compare the performance of \motivo\ to CC~\cite{Bressan&2018b} which, as said, is the current state of the art.
For readability, we give plots for a subset of datasets that are representative of the entire set of results.

\textbf{Set-up.}
We ran all our experiments on a commodity machine equipped with 64GB of main memory and 48 Intel Xeon E5-2650v4 cores at 2.5GHz with 30MB of L3 cache.
We allocated 880GB of secondary storage on a Samsung SSD850 solid-state drive, dedicated to the treelet count tables of \motivo.
Table~\ref{tab:graphs} shows the $9$ publicly available graphs on which we tested \motivo, and the largest tested value of $k$.
All graphs were made undirected and converted to the \motivo\ binary format.
For each graph we ran \motivo\ for all $k=5,6,7,8,9$, or until the build time did not exceed $1.5$ hours; except for \twitter\ and \lj, where we did $k=5,6$ regardless of time.
\begin{table}[h]
{\small
\begin{tabular}{lrrll}
graph & M nodes & M edges & {source} & {k} \\ 
\hline\\[-6pt]
\facebook & $0.1$ & $0.8$ & \SWS & 9\\
\bstan & $0.7$ & $6.6$ &  \SNAP & 9\\
\amazon & $0.7$ & $3.5$ & \SNAP & 9\\
\dblp & $0.9$ & $3.4$ & \SNAP & 9\\
\orkut & $3.1$ & $117.2$ & \SWS & 7\\
\lj & $5.4$ & $49.5$ & \LAW & 8\\
\yelp & $7.2$ & $26.1$ & \YLP & 8\\
\twitter & $41.7$ & $1202.5$ & \LAW & 6 (7$^*$)\\
\fs &  $65.6$ & $1806.1$ & \SNAP & 6 (7$^*$)
\end{tabular}
\caption{our graphs ($^*$ = with biased coloring)}
\label{tab:graphs}
}
\end{table}

\textbf{Ground truth.}
We computed exact 5-graphlet counts for \facebook, \dblp, \amazon, \lj\ and \orkut\ by running the ESCAPE algorithm~\cite{Pinar&2017}.
On the remaining graphs ESCAPE died by memory exhaustion or did not return within 24 hours.
For $k>5$ and/or larger graphs, we averaged the counts given by \motivo\ over $20$ runs, $10$ using naive sampling and $10$ using AGS.

\subsection{Computational performance}
\label{sub:perf}

\textbf{Build-up time.}
The table below shows the speedup of \motivo's build-up phase over CC's build-up phase (biased coloring is disabled).
Dashes mean CC failed by memory exhaustion or $64$-bit integer overflow (recall that \motivo\ works with 128-bit counters).
We removed \twitter\ and \fs\ since CC failed even for $k=5$.
Note that \motivo\ is $2\times$-$5\times$ faster than CC on $5$ out of $7$ graphs, and never slower on the other ones.
\begin{table}[h!]
\centering
\small
\begin{tabular}{rrrrrr}
graph & \multicolumn{1}{c}{k=$5$} & \multicolumn{1}{c}{k=$6$} & \multicolumn{1}{c}{k=$7$} & \multicolumn{1}{c}{k=$8$} & \multicolumn{1}{c}{k=$9$} \\
\midrule
\facebook\ & 2.9 &	2.9 & 3.3 & 4.8 & 3.5 \\
\bstan\ & 2.0 & - & - & - & - \\
\amazon\ & 1.6 & 1.3 & 1.2 & 1.0 & 1.1 \\
\dblp\ & 1.5 & 1.2 & 1.3 & 1.4 & 1.8  \\
\orkut\ & 4.5 & - & - &  & \\
\lj\ & 2.8 & 3.1 & - & - &\\
\yelp & 2.4 & - & - & - & 
\end{tabular}
\end{table}

\textbf{Count table size.}
The table below shows the ratio between the main memory footprint of CC and the total external memory usage of \motivo; both are indicators of the total count table size.
The footprint of CC is computed as the smallest JVM heap size that allowed it to run.
In almost all cases \motivo\ saves a factor of $2$, in half of the cases a factor of $5$, and on \yelp, the largest graph managed by CC, a factor of $8$.
For $k=7$, CC failed on 6 over 9 graphs, while \motivo\ processed all of them with a space footprint of less than 12GB (see below).

\begin{table}[h!]
\centering
\small
\begin{tabular}{rrrrrr}
graph & \multicolumn{1}{c}{k=$5$} & \multicolumn{1}{c}{k=$6$} & \multicolumn{1}{c}{k=$7$} & \multicolumn{1}{c}{k=$8$} & \multicolumn{1}{c}{k=$9$} \\
\midrule
\facebook\ & 108.7 & 87.5 & 54.5 & 17.3 & 7.1 \\
\bstan\ & 36.5 & - & - & - & - \\
\amazon\ & 6.6 & 3.3 & 2.1 & 1.1 & 1.0 \\
\dblp\ & 6.2 & 4.0 & 2.4 & 1.1 & 1.0 \\
\orkut\ & 8.5 & - & - & & \\
\lj\ & 11.4 & 3.8 & - & - & \\
\yelp\ & 8.0 & - & - & - & 
\end{tabular}
\end{table}

\textbf{Sampling speed.}
Finally, we compare the sampling speed of \motivo\ (without AGS) versus CC.
\motivo\ is always $10\times$ faster, and even $100\times$ faster on \yelp, the largest graph managed by CC, and the gap often diverges with $k$.
Obviously, this means \motivo\ gives more accurate estimates for a fixed time budget.
Note also that \motivo\ is faster even though it has to access the tables on disk.
\begin{table}[h!]
\centering
\small
\begin{tabular}{rrrrrr}
graph & \multicolumn{1}{c}{k=$5$} & \multicolumn{1}{c}{k=$6$} & \multicolumn{1}{c}{k=$7$} & \multicolumn{1}{c}{k=$8$} & \multicolumn{1}{c}{k=$9$} \\
\midrule
\facebook\ & 14.9 & 13.2 & 9.4 & 50.0 & 115.9 \\
\bstan\ & 29.3 & - & - & - & - \\
\amazon\ & 60.7 & 12.6 & 13.2 & 13.2 & 16.5 \\
\dblp\ & 17.7 & 11.4 & 10.1 & 44.8 & 88.6 \\
\orkut\ & 29.2 & - & - & & \\
\lj\ & 31.8 & 28.5 & - & - & \\
\yelp\ & 159.6 & - & - & - & 
\end{tabular}
\end{table}

\textbf{Final remarks.}
\motivo\ runs in minutes on graphs that CC could not even process (\bstan, \orkut, \yelp\ for $k=6$), and runs in less than one hour for all but the largest instance.
To put it in perspective, ESCAPE~\cite{Pinar&2017} and the random-walk algorithms of~\cite{Bhuiyan&2012,Chen&2016,Han&2016,Wang&2014} can take entire days even for graphs $10-100$ times smaller. We also note that, very differently from all these algorithms, \motivo\ is predictable as a function of $m$ and $k$.
One can see this by looking at the running time per million edge, and at the space usage per node, shown in Figure~\ref{fig:perf}.

\begin{figure}[h]
\centering
\includegraphics[scale=.66]{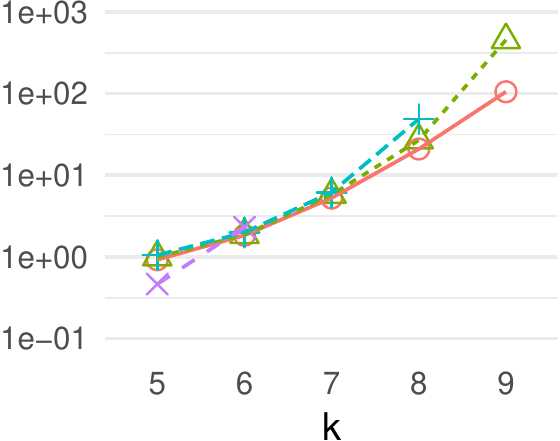}
\hfill
\includegraphics[scale=.66]{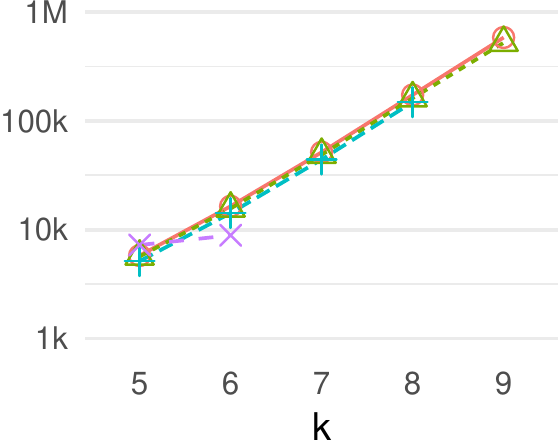}
\\[8pt]
\includegraphics[scale=.66]{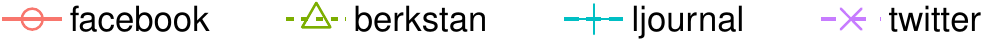}
\caption{Motivo's build-up time (seconds per million edge) and space usage (bits per input node).
}
\label{fig:perf}
\end{figure}

\subsection{Accuracy}
\label{sub:accuracy}
The previous section showed how \motivo\ scales to instances much larger than the state of the art.
We now show \motivo\ is also far more accurate in its estimates, in particular using AGS.
For a comparison against CC, note that in fact CC itself is strictly dominated by the ``naive sampling'' algorithm of \motivo.
Indeed, the standard sampling strategy of \motivo\ is \emph{exactly} the one of CC described in Section~\ref{sub:sampling}.
However, as shown above \motivo's implementation is much faster, hence takes many more samples and is consequently more accurate.
Therefore here we report the accuracy of the naive sampling of \motivo\ (which can be seen as an efficient version of CC), and compare it against AGS.
All plots below report the average over $10$ runs, with whiskers for the 10\% and 90\% percentiles.
Naive sampling is shown by the left bars, and AGS by the right bars.

A final remark.
The accuracy of estimates obviously depends on the number of samples taken.
One option would be to fix an absolute budget, say 1M samples.
Since however for $k=5$ there are only 21 distinct graphlets and for $k=8$ there are over 10k, we would certainly have much higher accuracy in the first case.
As a compromise, we tell \motivo\ to spend in sampling the same amount of time taken by the build-up phase.
This is also what an ``optimal'' time allocation strategy would do -- statistically speaking, if we have a budget of 100 seconds and the build-up takes 5 seconds, we would perform 10 runs with 5 seconds of sampling each and average over the estimates.

\textbf{Error in $\ell_1$ norm.}
First, we evaluate how accurately  \motivo\ reconstructs the global $k$-graphlet distribution.
If $\mathbf{f} = (f_1,\ldots,f_s)$ are the ground-truth graphlet frequencies, and $\mathbf{\hat f} = (\hat f_1,\ldots,\hat f_s)$ their estimates, then the $\ell_1$ error is $\ell_1(\mathbf{f}, \mathbf{\hat f}) = \sum_{i=1}^s |\hat f_i - f_i|$.
In our experiments, the $\ell_1$ error was below $5\%$ in all cases, and below $2.5\%$ for all $k \le 7$.

\newcommand{\err}{\operatorname{err}}
\textbf{Single-graphlet count error.}
The count error of $H$ is:
\begin{align}
\err_H = \frac{\hat{c}_H - c_H}{c_H}
\end{align}
where $c_H$ is the ground-truth count of $H$ and $\hat{c}_H$ its estimate.
Thus $\err_H=0$ means a perfect estimate, and $\err_H=-1$ means the graphlet is missed.
Figure~\ref{fig:errdist} shows the distribution of $\err_H$ for one run, for naive sampling (top) and AGS (bottom), as $k$ increases from left to right.
We see that (1) AGS is much more accurate, especially for larger values of $k$, and (2) CC's naive sampling misses many graphlets.
(The advantage of AGS over naive sampling is discussed in more detail below).
Inevitably, the error spreads out with $k$; recall that the total number of distinct $8$-graphlets is over $10^4$.
\newcommand{\myscale}{.8}
\begin{figure*}[h!]
\centering
\includegraphics[scale=\myscale]{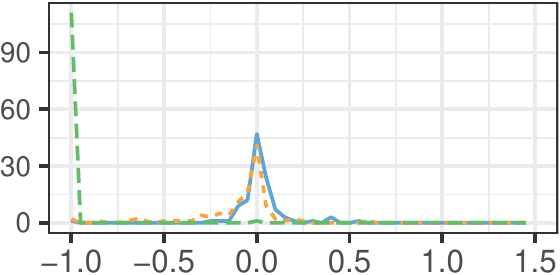}\hspace*{15pt}
\includegraphics[scale=\myscale]{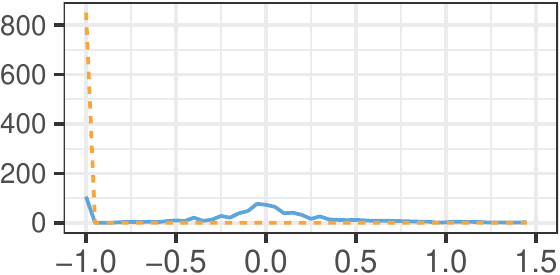}\hspace*{15pt}
\includegraphics[scale=\myscale]{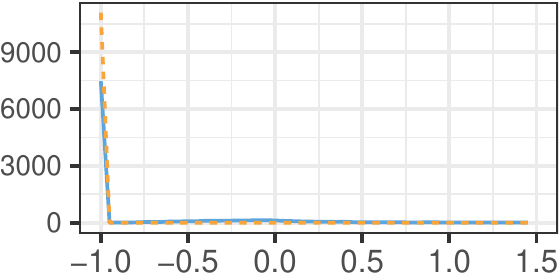}
\\[10pt]
\includegraphics[scale=\myscale]{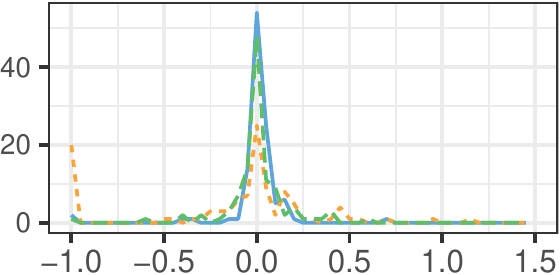}\hspace*{15pt}
\includegraphics[scale=\myscale]{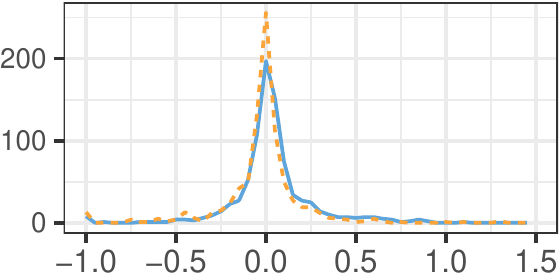}\hspace*{15pt}
\includegraphics[scale=\myscale]{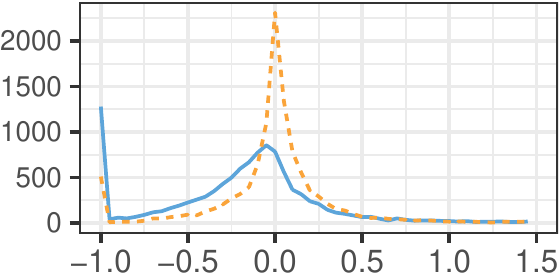}
\\[10pt]
\includegraphics[scale=0.7]{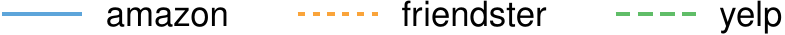}
\caption{distribution of graphlet count error for $k=6,7,8$.
Top: naive sampling.
Bottom: AGS.}
\label{fig:errdist}
\end{figure*}

\textbf{Number of accurate graphlets.}
For a complementary view, we consider the number of graphlets whose estimate is within $\pm $50\% of the ground-truth value (Figure~\ref{fig:n_ok}).
This number easily often reaches the thousands, and for $k=9$ even hundreds of thousands (note that the plot is in log-scale).
We remind the reader that all this is carried out in minutes or, in the worst case, in two hours.
Alternatively, we can look at these numbers in relative terms, that is, as a fraction of the total number of distinct graphlets in the ground truth (Figure~\ref{fig:f_ok}).
On all graphs except \bstan, this ratio is over 90\% of graphlets for $k=6$, over 75\% of graphlets for $k=7$, and over 50\% of graphlets for $k=8$, for either naive sampling or AGS.
The choice of $50\%$ has the sole purpose of deliver the picture; however, note that such an error is achieved for thousand of graphlets at once, which moreover have counts differing by many orders of magnitude.
\begin{figure*}[h!]
\centering
\includegraphics[scale=.75]{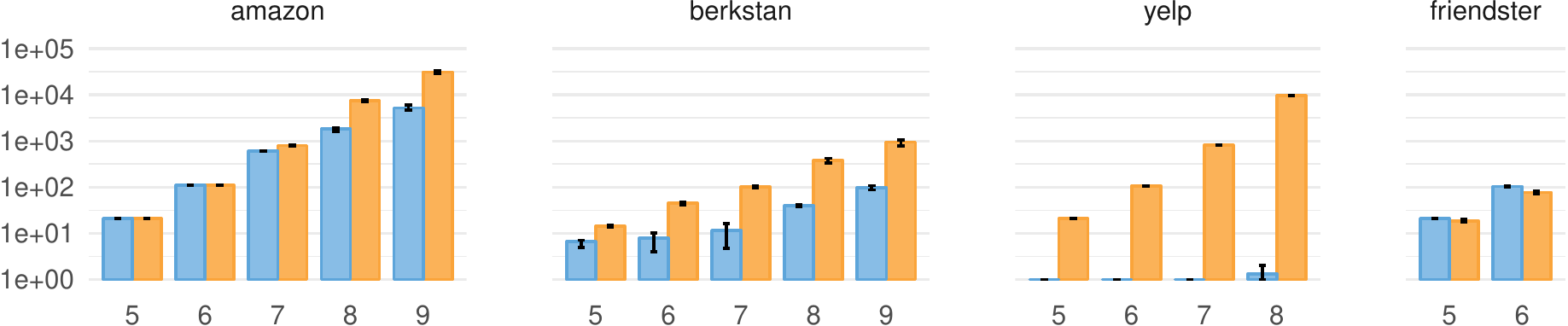}
\vspace*{3pt}
\includegraphics[scale=.74]{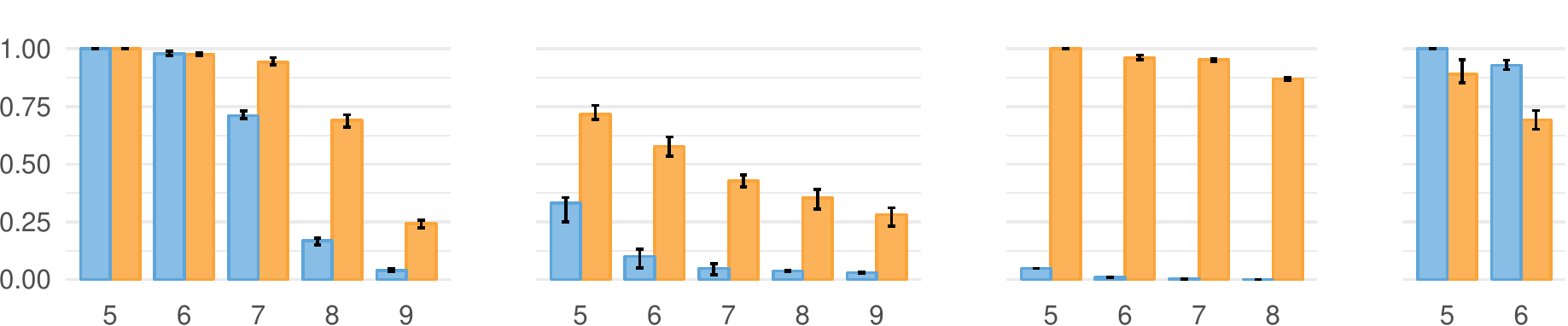}
\caption{graphlet counts with error within $\pm 50\%$.
Top: absolute number, in logarithmic scale.
Bottom: as a fraction of the total.}
\label{fig:f_ok}
\label{fig:n_ok}
\end{figure*}

\subsection{Performance of AGS}
Finally, we show how  AGS outperforms naive sampling, as anticipated.
The best example is the \yelp\ graph.
For $k=8$, over $99.9996\%$ of the $k$-graphlets are stars; and, as one can expect, naive sampling finds only the star, and thus gives accurate estimates for only $1$ graphlet, or 0.01\% of the total -- see Figure~\ref{fig:n_ok}. 
In other terms, naive sampling misses 9999 graphlets out of 10000.
However, AGS returns estimates within $50\%$ error for 9645 graphlets, or 87\% of the total.
Note also that the sampling rate of AGS is approximately 20 times higher than that of naive sampling for this dataset.
A complementary perspective is given in Figure~\ref{fig:minfreq}, which shows the frequency of the rarest graphlet that appeared in at least $10$ samples (to filter out those appearing just by chance).
For \yelp, the only graphlet found by naive sampling has frequency $99.9996\%$ (the star) while AGS always finds graphlets with frequency below $10^{-21}$.
To give an idea, imagine that for those graphlets naive sampling would need $\approx 3\cdot 10^3$ years even by taking $10^9$ samples per second.

Let us make a final remark.
On some graphs, AGS is slightly worse than naive sampling.
This is expected: AGS is designed for skewed graphlet distributions, and loses ground on flatter ones.
As a sanity check, we computed the $\ell_2$ norm of the graphlet distributions.
The three graphs where AGS beats naive sampling by a largest margin, \bstan, \yelp\ and \twitter, have for all $k$ the highest $\ell_2$ norms ($ > .99$).
Symmetrically, \facebook, \dblp\ and \fs, have for all $k$ the three lowest $\ell_2$ norms, and there AGS performs slightly worse than naive sampling.

\begin{figure*}[h]
\centering
\includegraphics[scale=.75]{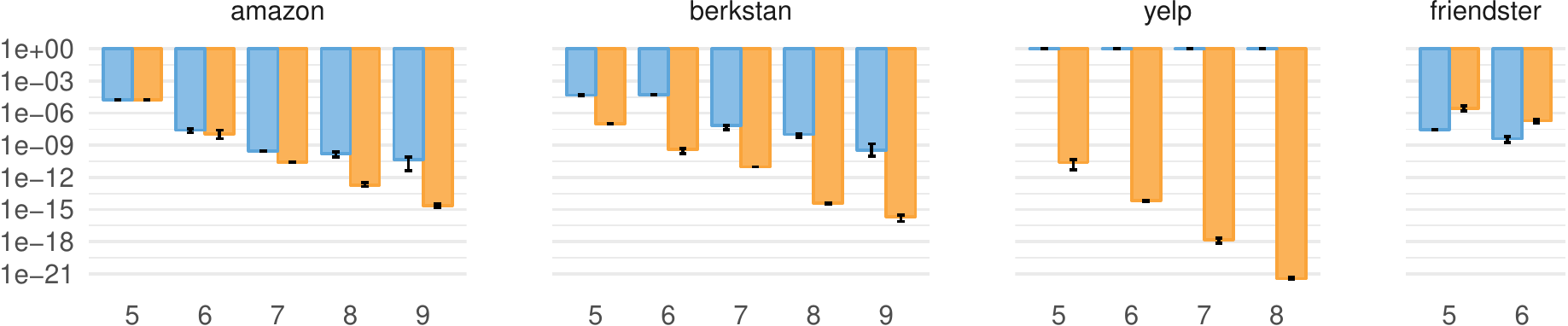}
\caption{frequency of the rarest graphlet appearing in 10 or more samples.}
\label{fig:minfreq}
\end{figure*}

\section{Conclusions}
Color coding is a versatile technique that can be harnessed to scale motif counting to truly massive graphs, with tens of millions of nodes and billions of edges, and with approximation guarantees previously out of reach.
Although we have made steps towards an efficient use of color coding, further optimizations are certainly possible.
It would especially interesting to investigate principled ways to reduce space usage, which is still a bottleneck of this approach.

%\clearpage
\appendix

\section{Proof of Theorem~\ref{THM:CONC_DEP}}
\label{apx:conc_dep}
We use a concentration bound for dependent random variables from~\cite{Dubhashi&2009}.
Let $\vk_i$ be the set of copies of $H_i$ in $G$.
For any $h \in \vk_i$ let $X_{h}$  be the indicator random variable of the event that $h$ becomes colorful.
Let $c_i = \sum_{h \in \vk_i} X_h$; clearly $\E[c_i] = p_k |\vk_i| = p_k n_i$.
Note that for any $h_1,h_2 \in \vk_i$, $X_{h_1}, X_{h_2}$ are independent if and only if $|V(h_1) \cap V(h_2)| \le 1$ i.e.\ if $h_1,h_2$ share at most one node.
For any $u,v \in G$ let then $g(u,v) = |\{h \in \vk_i : u,v \in h \}|$, and define $\chi_k = 1 + \max_{u,v \in G} g(u,v)$.
By standard counting argument one can see that $\max_{u,v \in G} g(u,v) \le (k-1)!\Delta^{k-2}-1$ and thus $\chi_k \le (k-1)!\Delta^{k-2}$.
The bound then follows immediately from Theorem~3.2 of~\cite{Dubhashi&2009} by setting $t= \epsilon c_i$, $(b_{\alpha} - a_{\alpha}) = 1$ for all $\alpha = h \in \vk_i$, and $\chi^*(\Gamma) \le \chi_k \le (k-1)!\Delta^{k-2}$.

\section{Proof of Theorem~\ref{THM:AGS_APX}}
\label{apx:proof_ags_apx}
The proof requires a martingale analysis, since the distribution from which we draw the graphlets changes over time.
We make use of a martingale tail inequality originally from~\cite{Freedman1975} and stated (and proved) in the following form in~\cite{Alon&2010}, page 1476:
\begin{theorem}[\cite{Alon&2010}, Theorem 2.2]
\label{thm:alon}
Let $(Z_0,Z_1,\ldots)$ be a martingale with respect to the filter $(\mathcal{F}_\tau)$. Suppose that $Z_{\tau+1}-Z_\tau \le M$ for all $\tau$, and write $V_t = \sum_{\tau=1}^t \var{Z_\tau|\mathcal{F}_{\tau-1}}$. Then for any $z,v>0$ we have:
\begin{align}
\prob\!\big[\exists \, t  : Z_t \ge Z_0 + z, V_t \le v\big] \le \exp\!{\Big[\!-\frac{z^2}{2(v+Mz)}\Big]}
\end{align}
\end{theorem}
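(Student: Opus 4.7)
The plan is to prove Freedman's inequality by the standard exponential supermartingale method. Fix $\lambda > 0$ and set $\psi(\lambda) = (e^{\lambda M} - 1 - \lambda M)/M^2$; define
\[
W_t = \exp\bigl(\lambda(Z_t - Z_0) - \psi(\lambda)\, V_t\bigr), \qquad W_0 = 1.
\]
The core step is to show that $(W_t)$ is a nonnegative $(\mathcal{F}_t)$-supermartingale. This reduces to the conditional MGF bound that, for any random variable $X$ with $\E[X \mid \mathcal{F}] = 0$ and $X \le M$ almost surely,
\[
\E\bigl[e^{\lambda X} \,\big|\, \mathcal{F}\bigr] \le \exp\bigl(\psi(\lambda)\,\var{X \mid \mathcal{F}}\bigr).
\]
I would prove this by observing that $x \mapsto (e^{\lambda x} - 1 - \lambda x)/x^2$ is nondecreasing, so for $x \le M$ one has $e^{\lambda x} \le 1 + \lambda x + \psi(\lambda)\, x^2$; taking conditional expectation, using $\E[X \mid \mathcal{F}] = 0$, and then $1+u \le e^u$ yields the bound. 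Applied to the increment $X = Z_{t+1}-Z_t$ (which is mean-zero given $\mathcal{F}_t$ and bounded above by $M$), this gives $\E[W_{t+1} \mid \mathcal{F}_t] \le W_t$.

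Next I would convert this supermartingale property into the uniform-in-$t$ tail bound. Let $\tau = \inf\{t : Z_t - Z_0 \ge z\}$, a stopping time. On the event $A = \{\exists\, t : Z_t \ge Z_0 + z,\, V_t \le v\}$ we have $\tau < \infty$ and, since $V_t$ is nondecreasing in $t$, $V_\tau \le v$; hence $W_\tau \ge \exp(\lambda z - \psi(\lambda) v)$ on $A$. Applying the optional stopping theorem to the bounded time $\tau \wedge T$, letting $T \to \infty$ via Fatou's lemma, and restricting the expectation to $A$ gives
\[
\prob[A]\cdot \exp\bigl(\lambda z - \psi(\lambda) v\bigr) \le \E[W_0] = 1,
\]
so $\prob[A] \le \exp(-\lambda z + \psi(\lambda) v)$ for every $\lambda > 0$.

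Finally I would optimize in $\lambda$ by choosing $\lambda = M^{-1}\ln(1 + Mz/v)$. A direct substitution collapses the exponent to $-(v/M^2)\, h(Mz/v)$, where $h(y) = (1+y)\ln(1+y) - y$. The stated bound then follows from the elementary inequality $h(y) \ge y^2/\bigl(2(1+y)\bigr)$ for $y \ge 0$, which can be verified by noting that both sides and their first derivatives vanish at $y = 0$, and computing $h''(y) - [y^2/(2(1+y))]'' = y(y+2)/(1+y)^3 \ge 0$, so that two integrations from $0$ establish the claim. The main technical care lies in Step 2 (constructing the correct exponential supermartingale via the sharp MGF estimate) and Step 3 (handling the ``$\exists\, t$'' quantifier through a stopping time and justifying optional stopping on an unbounded horizon); the optimization in Step 4 is routine calculus.
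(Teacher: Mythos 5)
The paper does not prove this statement---it is imported as a black box, cited as Theorem 2.2 of Alon et al.\ (2010), itself a form of Freedman's (1975) martingale inequality, so there is no in-paper proof to compare against. Your reconstruction is the standard and correct proof of Freedman's inequality: the exponential supermartingale $W_t = \exp(\lambda(Z_t-Z_0) - \psi(\lambda)V_t)$ with $\psi(\lambda) = (e^{\lambda M}-1-\lambda M)/M^2$, the one-sided MGF bound via monotonicity of $x\mapsto (e^{\lambda x}-1-\lambda x)/x^2$ (which correctly needs only $X\le M$, matching the theorem's one-sided hypothesis), the stopping time $\tau=\inf\{t: Z_t-Z_0\ge z\}$ combined with monotonicity of $V_t$ to handle the $\exists\,t$ quantifier, optional stopping with Fatou, and the optimization $\lambda = M^{-1}\ln(1+Mz/v)$ followed by the elementary bound $h(y)\ge y^2/(2(1+y))$. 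All the algebra checks out (in particular the substitution really does collapse the exponent to $-(v/M^2)h(Mz/v)$, and the second-derivative comparison establishing the final inequality is correct). This is essentially the argument given in the cited source.
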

We now plug into the formula of Theorem~\ref{thm:alon} the appropriate quantities.
In what follows we \emph{fix} a graphlet $H_i$ and analyse the concentration of it estimate.
Unless necessary, we drop the index $i$ from the notation.
\\
\textbf{A.} For $t \ge 1$ let $X_t$ be the indicator random variable of the event that $H_i$ is the graphlet sampled at step $t$ (line~\ref{ags:h_j} of AGS).\\
\textbf{B.} For $t \ge 0$ let $Y_j^t$ be the indicator random variable of the event, at the end of step $t$, the treelet to be sampled at the next step is $T_j$.\\
\textbf{C.} For $t \ge 0$ let $\mathcal{F}_t$ be the event space generated by the random variables $Y_j^{\tau} : j\in[r], \, \tau=0,\ldots,t$.
For any random variable $Z$, then, $\E[Z \,| \,\mathcal{F}_t] = \E[Z \, | \, Y_j^{\tau} : j\in[r], \, \tau =0,\ldots,t]$, and $\var{Z\,|\,\mathcal{F}_t}$ is defined analogously. \\
\textbf{D.} For $t \ge 1$ let $P_t = \E[X_t|\mathcal{F}_{t-1}]$ be the probability that the graphlet sampled at the $t$-th invocation of line~\ref{ags:h_j} is $H_i$, as a function of the events up to time $t-1$.
It is immediate to see that $P_t = \sum_{j=1}^r Y_j^{t-1} a_{ji}$.
\\
\textbf{E.} Let $Z_0 = 0$, and for $t \ge 1$ let $Z_t = \sum_{\tau=1}^t (X_t - P_t)$.
Now, $(Z_t)_{t \ge 0}$ is a martingale with respect to the filter $(\mathcal{F}_t)_{t\ge 0}$, since $Z_t$ is obtained from $Z_{t-1}$ by adding $X_t$ and subtracting $P_t$ which is precisely the expectation of $X_t$ w.r.t.\ $\mathcal{F}_{t-1}$.
%More formally:
%\begin{align}
%\E[Z_t|\mathcal{F}_{t-1}] &= \E[Z_{t-1} + X_t - P_t |\mathcal{F}_{t-1}]
%\\&= \E[Z_{t-1} |\mathcal{F}_{t-1}] + \E[X_t |\mathcal{F}_{t-1}] - \E[P_t |\mathcal{F}_{t-1}]
%\\&= \E[Z_{t-1} |\mathcal{F}_{t-1}]
%\end{align}
\textbf{F.} Let $M=1$, since $|Z_{t+1}-Z_t| = |X_{t+1} - P_t| \le 1$ for all $t$.\\
%\end{itemize}
%\vspace{5pt}

Finally, notice that $\var{Z_t|\mathcal{F}_{t-1}} = \var{X_t|\mathcal{F}_{t-1}}$, since again $Z_t = Z_{t-1} + X_t - P_t$, and both $Z_{t-1}$ and $P_t$ are a function of $\mathcal{F}_{t-1}$, so their variance w.r.t.\ $\mathcal{F}_{t-1}$ is $0$.
Now, $\var{X_t|\mathcal{F}_{t-1}} = P_t(1-P_t) \le P_t$; and therefore we have $V_t = \sum_{\tau=1}^t \var{Z_\tau \,|\,\mathcal{F}_{\tau-1}} \le \sum_{\tau=1}^t P_{\tau}$.
Then by Theorem~\ref{thm:alon}:
\begin{lemma}
\label{cor:martbound}
For all $z,v > 0$ we have
\begin{align}
\prob\!\big[\exists \, t : Z_t \ge z, \sum_{\tau=1}^t P_\tau \le v \big] &\le 
\exp{\!\Big[\!-\!\frac{z^2}{2(v+z)}\Big]}
\end{align}
\end{lemma}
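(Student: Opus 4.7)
The plan is to derive Lemma~\ref{cor:martbook} as a direct consequence of Theorem~\ref{thm:alon}, after verifying that the setup items A--F established just before the lemma satisfy every hypothesis of that theorem. Concretely, I would first recall that $(Z_t)_{t\ge 0}$ is a martingale with respect to the filter $(\mathcal{F}_t)_{t\ge 0}$ (item E), that $Z_0=0$, and that the bounded-difference constant is $M=1$ (item F, since $|Z_{t+1}-Z_t|=|X_{t+1}-P_{t+1}|\le 1$ as both $X_{t+1}$ and $P_{t+1}$ lie in $[0,1]$). These are exactly the ingredients required to apply Theorem~\ref{thm:alon}.

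Next I would handle the conditional variance. Because $Z_t = Z_{t-1} + X_t - P_t$ with both $Z_{t-1}$ and $P_t$ being $\mathcal{F}_{t-1}$-measurable, we have $\var{Z_t\,|\,\mathcal{F}_{t-1}}=\var{X_t\,|\,\mathcal{F}_{t-1}}$. Since $X_t$ is a $\{0,1\}$-valued random variable with conditional mean $P_t$, its conditional variance equals $P_t(1-P_t)\le P_t$. Summing over $\tau=1,\dots,t$ gives $V_t = \sum_{\tau=1}^t \var{Z_\tau\,|\,\mathcal{F}_{\tau-1}} \le \sum_{\tau=1}^t P_\tau$. This inequality is the bridge between the ``natural'' quantity $V_t$ appearing in Theorem~\ref{thm:alon} and the quantity $\sum_{\tau=1}^t P_\tau$ that appears in our lemma.

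The final step is a set-inclusion argument. Fix $z,v>0$. On the event $\{\exists\, t : Z_t \ge z,\ \sum_{\tau=1}^t P_\tau \le v\}$ we have, for the same $t$, $V_t \le \sum_{\tau=1}^t P_\tau \le v$ by the inequality above. Hence
\[
\big\{\exists\, t : Z_t \ge z,\ \textstyle\sum_{\tau=1}^t P_\tau \le v\big\}
\ \subseteq\
\big\{\exists\, t : Z_t \ge Z_0+z,\ V_t \le v\big\},
\]
using $Z_0=0$. Monotonicity of probability and Theorem~\ref{thm:alon} with $M=1$ then yield the bound $\exp\!\big[-z^2/(2(v+z))\big]$, which is exactly the statement of the lemma.

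There is no genuine obstacle in this argument: the content of the lemma is simply a convenient repackaging of Theorem~\ref{thm:alon} tailored to the AGS setting, and the only step that requires any care is the conditional-variance computation that replaces $V_t$ by the more explicit upper bound $\sum_{\tau=1}^t P_\tau$. I would be careful to remark that the inequality $P_t(1-P_t)\le P_t$ uses $P_t\in[0,1]$, which is immediate since $P_t$ is itself a conditional probability, and that the almost-sure bound $|Z_{t+1}-Z_t|\le 1$ holds pathwise, not merely in expectation, so that the hypothesis of Theorem~\ref{thm:alon} is actually met.
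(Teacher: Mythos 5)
Your proof is correct and follows essentially the same route as the paper: it applies Theorem~\ref{thm:alon} via items A--F, using $\var{Z_t\,|\,\mathcal{F}_{t-1}}=P_t(1-P_t)\le P_t$ to conclude $V_t\le\sum_{\tau\le t}P_\tau$ and hence the event inclusion. Your version is slightly more careful than the paper's (which leaves the set-inclusion step implicit and even contains a small typo, writing $|X_{t+1}-P_t|$ where $|X_{t+1}-P_{t+1}|$ is meant), but there is no substantive difference in approach.
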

Consider now AGS($\epsilon, \delta$).
Recall that we are looking at a \emph{fixed} graphlet $H_i$ (which here does \emph{not} denote the graphlet sampled at line~\ref{ags:h_j}).
Note that $\sum_{\tau=1}^t X_{\tau}$ is exactly the value of $c_i$ after $t$ executions of the main cycle (see line~\ref{ags:c_j}).
Similarly, note that $\sum_{\tau=1}^t P_\tau$ is the value of $g_i \cdot w_i$ after $t$ executions of the main cycle: indeed, if $Y_j^{t-1}=1$, then at step $\tau$ we add to $w_i$ the value $\frac{\sigma_{ij}}{r_j}$ (line~\ref{ags:w_j}), while the probability that a sample of $T_j$ yields $H_i$ is exactly $\frac{g_i\sigma_{ij}}{r_j}$.
Therefore, after the main cycle has been executed $t$ times, $Z_t = \sum_{\tau=1}^t (X_t - P_t)$ is the value of $c_i - g_i w_i$.

Now to the bounds.
Suppose that, when AGS($\epsilon, \delta$) returns, $\frac{c_i}{w_i} \ge g_i(1 + \epsilon)$ i.e.\ $c_i(1 - \frac{\epsilon}{1+\epsilon}) \ge g_i w_i$.
On the one hand this implies that $c_i - g_iw_i \ge c_i\frac{\epsilon}{1+\epsilon}$ i.e.\ $Z_t \ge c_i\frac{\epsilon}{1+\epsilon}$; and since upon termination $c_i = \cerr$, this means $Z_t \ge \cerr\frac{\epsilon}{1+\epsilon}$.
On the other hand it implies $g_i w_i \le c_i(1 - \frac{\epsilon}{1+\epsilon})$ i.e.\ $\sum_{\tau=1}^t P_\tau \le c_i(1 - \frac{\epsilon}{1+\epsilon})$; again since upon termination $c_i = \cerr$, this means $\sum_{\tau=1}^t P_\tau \le \cerr(1 - \frac{\epsilon}{1+\epsilon})$.
We can then invoke Lemma~\ref{cor:martbound} with $z=\cerr\frac{\epsilon}{1+\epsilon}$ and $v=\cerr(1 - \frac{\epsilon}{1+\epsilon})$, and since $v+z=\cerr$ we get:
\begin{align}
\prob\!\Big[\frac{c_i}{w_i} \ge g_i(1+\epsilon)\Big]
&\le \exp{\!\Big[\!-\!\frac{(\cerr\frac{\epsilon}{1+\epsilon})^2}{2\cerr}\Big]}\\
&= \exp{\!\Big[\!-\!\frac{\epsilon^2 \cerr}{2(1+\epsilon)^2}\Big]}
\end{align}
but $\frac{\epsilon^2 \cerr}{2(1+\epsilon)^2} \ge \frac{\epsilon^2}{2(1+\epsilon)^2} \frac{4}{\epsilon^2}\ln\!\big(\frac{2s}{\delta}\big) \ge \ln\!\big(\frac{2s}{\delta}\big)$ and thus the probability above is bounded by $\frac{\delta}{2s}$.

Suppose instead that, when AGS($\epsilon, \delta$) returns, $\frac{c_i}{w_i} \le g_i(1 - \epsilon)$ i.e.\ $c_i(1 + \frac{\epsilon}{1-\epsilon}) \le g_i w_i$.
On the one hand this implies that $c_i - g_i w_i \ge \frac{\epsilon}{1-\epsilon} c_i$, that is, upon termination we have $-Z_t \ge \frac{\epsilon}{1-\epsilon}\cerr$.
Obviously $(-Z_t)_{t\ge 0}$ is a martingale too with respect to the filter $(\mathcal{F}_t)_{t\ge 0}$, and therefore Lemma~\ref{cor:martbound} still holds if we replace $Z_t$ with $-Z_t$.
Let then $t_0 \le t$ be the first step where $-Z_{t_0} \ge \frac{\epsilon}{1-\epsilon} \cerr$; since $|Z_t - Z_{t-1}| \le 1$, it must be $-Z_{t_0} < \frac{\epsilon}{1-\epsilon}\cerr + 1$.
Moreover $\sum_{\tau=1}^t X_\tau$ is nondecreasing in $t$, so $\sum_{\tau=1}^{t_0} X_\tau \le \cerr$.
It follows that $\sum_{\tau=1}^{t_0} P_\tau = -Z_{t_0} + \sum_{\tau=1}^{t_0} X_\tau < \frac{\epsilon}{1-\epsilon} \cerr + 1 + \cerr = \frac{1}{1-	\epsilon} \cerr +1$.
Invoking again Lemma~\ref{cor:martbound} with $z=\frac{\epsilon}{1-\epsilon} \cerr$ and $v=\frac{1}{1-\epsilon} \cerr +1$, we obtain:
\begin{align}
\prob\!\big[\frac{c_i}{w_i} \le g_i(1-\epsilon)\big]
&\le \exp{\!\Big[\!-\!\frac{(\cerr\frac{\epsilon}{1-\epsilon})^2}{2(\frac{1+\epsilon}{1-\epsilon}\cerr+1)}\Big]}\\
&\le \exp{\!\Big[\!-\!\frac{\epsilon^2 \cerr^2}{2(1+\cerr)}\Big]}
\end{align}
but since $\cerr \ge 4$ then $\frac{\cerr}{1 + \cerr} \ge \frac{4}{5}$ and so $\frac{\epsilon^2 \cerr^2}{2(1+\cerr)} \ge \frac{2 \epsilon^2 \cerr}{5}$.
By replacing $\cerr$ we get $\frac{2 \epsilon^2 \cerr}{5} \ge \frac{2 \epsilon^2}{5} \frac{4}{\epsilon^2}\ln\!\big(\frac{2s}{\delta}\big) > \ln\!\big(\frac{2s}{\delta}\big)$ and thus once again the probability of deviation is bounded by $\frac{\delta}{2s}$.

By a simple union bound, the probability that $\frac{c_i}{w_i}$ is not within a factor $(1 \pm \epsilon)$ of $g_i$ is at most $\frac{\delta}{s}$.
The theorem follows by a union bound on all $i \in [s]$.

\section{Proof of Theorem~\ref{THM:AGS_COST}}
\label{apx:proof_ags_cost}
For each $i \in [s]$ and each $j \in [r]$ let $a_{ji}$ be the probability that \sample($T_j$) returns a copy of $H_i$.
Note that $a_{ji} = g_i\sigma_{ij}/r_j$, the fraction of colorful copies of $T_j$ that span a copy of $H_i$.
Our goal is to allocate, for each $T_j$, the number $x_j$ of calls to \sample($T_j$), so that (1) the total number of calls $\sum_{j} x_j$ is minimised and (2) each $H_i$ appears at least $\cerr$ times in expectation.
Formally, let $\mathbf{A} = (a_{ji})^{\intercal}$, so that columns correspond to treelets $T_j$ and rows to graphlets $H_i$, and let $\bfx = (x_1,\ldots,x_r) \in \mathbb{N}^r$.
We obtain the following integer program:
\begin{align*}
\left\{
\begin{array}{l}
\min \mathbf{1}^{\intercal} \bfx\\
\text{s.t.} \, \mathbf{A} \bfx \ge \cerr\,\mathbf{1}\\
\phantom{\text{s.t.}} \, \bfx \in \mathbb{N}^r
\end{array}
\right.
\end{align*}

We now describe the natural greedy algorithm for this problem; it turns out that this is precisely AGS.
The algorithm proceeds in discrete time steps.
Let $\bfx^0 = \mathbf{0}$, and for all $t \ge 1$ denote by $\bfx^t$ the partial solution after $t$ steps.
The vector $\mathbf{A} \bfx^t$ is an $s$-entry column whose $i$-th entry is the expected number of occurrences of $H_i$ drawn using the sample allocation given by $\bfx^t$.
We define the vector of residuals at time $t$ as $\bfc^t = \max(\mathbf{0}, \bfc - \mathbf{A} \bfx^t)$, and for compactness we let $c^t =  \mathbf{1}^{\intercal} \bfc^t$.
Note that $\bfc^0 = \cerr\,\mathbf{1}$ and $c^0 = s \cerr$.
Finally, we let $U^t = \{i : c^t_i > 0\}$; this is the set of graphlets not yet covered at time $t$, and clearly $U^0 = [s]$.

%Under the same assumptions of Theorem~\ref{THM:AGS_APX}, we show AGS uses at most $O(\ln(s)) = O(k^2)$ times the number of samples in the optimal solution.

At the $t$-th step the algorithm chooses the $T_{j^*}$ such that \sample($T_{j^*}$) spans an uncovered graphlet with the highest probability, by computing:
\begin{align}
\label{eqn:istar}
j^* := \arg \max_{j=1,\ldots,r} \sum_{i \in U_t} a_{ji}
\end{align}
It then lets $\bfx^{t+1} = \bfx + \mathbf{e}_{j^*}$, where $\mathbf{e}_{j^*}$ is the indicator vector of $j^*$, and updates $\bfc^{t+1}$ accordingly.
The algorithm stops when $U^t = \emptyset$, since then $\bfx^t$ is a feasible solution.
We prove:
\begin{lemma}
\label{lem:greedy_cost}
Let $z$ be the cost of the optimal solution.
Then the greedy algorithm returns a solution of cost $O(z\ln(s))$.
\end{lemma}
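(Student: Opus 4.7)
The plan is to derive the lemma from the standard approximation analysis of the greedy algorithm for fractional set cover. The integer program is a covering problem: each element $i \in [s]$ has demand $\bar c$, and column $j$ provides coverage $a_{ji}$ to $i$; the greedy described is exactly the natural greedy for this problem, picking at every step the column that maximizes the total coverage contributed to still-uncovered elements. Relaxing $\mathbf{x} \in \mathbb{N}^r$ to $\mathbf{x} \in \mathbb{R}^r_{\ge 0}$ gives an LP with optimum $z_{LP} \le z$, so it suffices to prove that the greedy's cost is $O(z_{LP} \ln s)$.

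The core step is a per-iteration progress lemma obtained by averaging against the LP optimum. Fix a step $t$ and let $\mathbf{x}^*$ be an LP-optimal solution with $\mathbf{1}^\intercal \mathbf{x}^* = z_{LP}$. For each $i \in U^t$, feasibility of $\mathbf{x}^*$ gives $\sum_j x^*_j a_{ji} \ge \bar c \ge c^t_i$, so summing over $U^t$ yields $\sum_j x^*_j \sum_{i \in U^t} a_{ji} \ge c^t$; by pigeonhole, $\max_j \sum_{i \in U^t} a_{ji} \ge c^t / z_{LP}$. Since the greedy picks precisely such a maximizer $j^*$, the ``attempted coverage'' $\alpha^t := \sum_{i \in U^t} a_{j^* i}$ satisfies $\alpha^t \ge c^t / z_{LP}$.

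Next I would convert attempted coverage into actual decrease of the potential $c^t$. Writing $c^t - c^{t+1} = \sum_i \min(a_{j^* i}, c^t_i) = \alpha^t - w^t$ with $w^t := \sum_{i \in U^t} \max(0, a_{j^* i} - c^t_i)$ the ``wastage,'' we obtain the recurrence $c^{t+1} \le c^t(1 - 1/z_{LP}) + w^t$. The key bookkeeping observation is that $w^t$ is nonzero only at iterations at which some $i \in U^t$ completes (i.e., $c^{t+1}_i = 0$), and each such completion contributes wastage at most $a_{j^* i} \le 1$; since each of the $s$ elements completes at most once, we obtain the global bound $\sum_t w^t \le s$. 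Combining this cap with the multiplicative-decrease recurrence via a standard phase analysis (epochs in which $c^t$ halves, with iterations inside each epoch charged either to genuine LP progress or to a completion event) then yields a total of $T = O(z_{LP} \ln s)$ iterations, as desired.

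The hard part will be bridging the gap between the quantity the greedy maximizes, $\sum_{i \in U^t} a_{j^* i}$, and the quantity that actually drives the potential $c^t$ downward, $\sum_i \min(a_{j^* i}, c^t_i)$. Bounding the total wastage globally by $s$ (rather than naively iteration-by-iteration) and folding it into the recurrence as an additive term subdominant to the $\ln s$ factor is the crucial technical step; an alternative would be to redefine the greedy to maximize the truncated coverage directly, which would sidestep the issue but would no longer match the algorithm actually used by AGS.
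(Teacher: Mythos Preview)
Your approach is essentially the same as the paper's: both arguments use an averaging step against the optimum to show that the greedy's chosen column has attempted coverage $\sum_{i\in U^t} a_{j^*i} \ge c^t/z$, and both handle the discrepancy between attempted and actual coverage (your ``wastage'' $w^t$) by observing that overshoot events coincide with element completions and hence occur at most $s$ times in total. The difference is purely in packaging. The paper does a clean two-case split: either some element completes at this step (Case~1, at most $s$ times) or not (Case~2, in which $c^{t+1}\le(1-1/z)c^t$). This sidesteps your recurrence $c^{t+1}\le c^t(1-1/z_{LP})+w^t$ and the subsequent ``phase analysis'', which you leave as a sketch; the case split gets you directly to $c^t\le c^0(1-1/z)^\ell$ after $\ell$ Case-2 steps, with at most $s$ Case-1 steps interleaved. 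Your use of the LP optimum $z_{LP}\le z$ in place of the integral $z$ is a harmless refinement. Both your sketch and the paper's final ``doubling trick'' (rerunning the analysis with demand $2\bar c$) are a bit informal about converting ``$c^t$ is small'' into ``$U^t=\emptyset$'', so you are not leaving a gap the paper itself closes more carefully.
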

\begin{proof}
Let $w_j^t = \sum_{i \in U_t} a_{ji}$ (note that this is a \emph{treelet} weight).
For any $j \in [r]$ denote by $\Delta_j^t = c^{t} - c^{t+1}$ the decrease in overall residual weight we would obtain if $j^* = j$.
Note that $\Delta_j^t \le w_j^t$.
We consider two cases.\\
\textbf{Case 1}: $\Delta_{j^*}^t < w_{j^*}^t$.
This means for some $i \in U_t$ we have $c_i^{t+1} = 0$, implying $i \notin U_{t+1}$.
In other terms, $H_i$ becomes covered at time ${t+1}$.
Since the algorithm stops when $U_t = \emptyset$, this case occurs at most $|U^0| = s$ times.
\\
\textbf{Case 2}: $\Delta_{j^*}^t = w_{j^*}^t$.
Suppose then the original problem admits a solution with cost $z$.
Obviously, the ``residual'' problem where $\mathbf{c}$ is replaced by $\mathbf{c}^{t}$ admits a solution of cost $z$, too.
This implies the existence of $j \in [r]$ with $\Delta_j^t \ge \frac{1}{z} c^t$, for otherwise any solution for the residual problem would have cost $> z$.
But by the choice of $j^*$ it holds $\Delta_{j^*} = w_{j^*}^t \ge w_{j}^t \ge \Delta_j^t$ for any $j$, hence $\Delta_{j^*}^t \ge \frac{1}{z} c^t$.
Thus by choosing $j^*$ we get $c^{t+1} \le (1-\frac{1}{z})c^t$.
After running into this case $\ell$ times, the residual cost is then at most $c^0 (1-\frac{1}{z})^\ell$.

Note that $\ell + s \ge c^0 = s \cdot \cerr$ since at any step the overall residual weight can decrease by at most $1$.
Therefore the algorithm performs $\ell + s = O(\ell)$ steps overall.
Furthermore, after $\ell + s$ steps we have $c^{\ell+s} \le s \cerr e^{-\frac{\ell}{z}}$, and by picking $\ell=z\ln(2s)$ we obtain $c^{\ell+s} \le \frac{\cerr}{s}$, and therefore each one of the $s$ graphlets receives weight at least $\frac{\cerr}{2}$.
Now, if we replace $\cerr\,\mathbf{1}$ with $ 2 \cerr\,\mathbf{1}$ in the original problem, the cost of the optimal solution is at most $2 z$, and in $O(z \ln(s))$ steps the algorithm finds a cover where each graphlet has weight at least $\cerr$.
\end{proof}
Now, note that the treelet index $j^*$ given by Equation~\ref{eqn:istar} remains unchanged as long as $U_t$ remains unchanged.
Therefore we need to recompute $j^*$ only when some new graphlet exits $U_t$ i.e.\ becomes covered.
In addition, we do not need each value $a_{ji}$, but only their sum $\sum_{i \in U_t} a_{ji}$.
This is precisely the quantity that AGS estimates at line~\ref{ags:estim}.
Theorem~\ref{THM:AGS_COST} follows immediately as a corollary.

\balance

\clearpage
\bibliographystyle{abbrv}
\bibliography{biblio}

\end{document}